\title{Syntactically and semantically regular languages of \pdflambda-terms coincide through logical relations}
\titlerunning{Syntactically and semantically regular languages of \pdflambda-terms coincide} 
\author{Vincent Moreau}{IRIF \& Université Paris Cité \& Inria Paris, France \and \url{http://www.irif.fr/~moreau}}{moreau@irif.fr}{https://orcid.org/0009-0005-0638-1363}{}
\author{Lê Thành Dũng (Tito) Nguy\~{\^e}n}{Laboratoire de l'informatique du parallélisme (LIP), École normale supérieure de Lyon, France \and \url{https://nguyentito.eu/}}{nltd@nguyentito.eu}{https://orcid.org/0000-0002-6900-5577}{Supported by the LABEX MILYON (ANR-10-LABX-0070) of Université de Lyon, within the program \enquote{Investissements d'Avenir} (ANR-11-IDEX-0007) operated by the French National Research Agency (ANR).}
\authorrunning{V.~Moreau and L.~T.~D.~Nguy\~{\^e}n} 
\keywords{regular languages,
simple types,
denotational semantics,
logical relations} 
\begin{document}

\maketitle

\begin{abstract}
    A fundamental theme in automata theory is regular languages of words and trees, and their many equivalent definitions. Salvati has proposed a generalization to regular languages of simply typed~$\lambda$-terms, defined using denotational semantics in finite sets.

    We provide here some evidence for its robustness. First, we give an equivalent syntactic characterization that naturally extends the seminal work of Hillebrand and Kanellakis connecting regular languages of words and syntactic $\lambda$-definability. Second, we show that any finitary extensional model of the simply typed $\lambda$-calculus, when used in Salvati's definition, recognizes exactly the same class of languages of $\lambda$-terms as the category of finite sets does.

    The proofs of these two results rely on logical relations and can be seen as instances of a more general construction of a categorical nature, inspired by previous categorical accounts of logical relations using the gluing construction.
\end{abstract}

\section{Introduction}
\label{sec:introduction}

Much work has been devoted to the study of regular languages of words and trees -- also called recognizable -- and their equivalent characterizations, typically in terms of automata, algebra, and logic. The remarkable robustness of this notion of regularity has led to attempts to extend it to several other structures, such as infinite words/trees or graphs of bounded treewidth -- many examples can be found, for instance, in~\cite{DBLP:journals/corr/abs-2008-11635}.

This paper focuses on a less studied extension: recognizable languages of simply typed $\lambda$-terms, introduced by Salvati~\cite{DBLP:conf/wollic/Salvati09}. They are a conservative generalization~\cite[\S3]{DBLP:conf/wollic/Salvati09}:\footnote{Alternatively, see \cite[Proposition~7.1]{entics:12280} for the case of words.} using the "Church encoding", finite words and trees can be represented in the simply typed $\lambda$-calculus, and recognizability for Church encodings coincides with the usual notion of regularity. Furthermore, as Salvati explains in his habilitation thesis~\cite{DBLP:books/hal/Salvati15}, regular languages of $\lambda$-terms can be used to shed light on several classical topics concerning the simply typed $\lambda$-calculus, such as higher-order matching~\cite[\S6.2]{DBLP:conf/wollic/Salvati09} and higher-order grammars~\cite{DBLP:journals/iandc/KobeleS15}.

However, not many characterizations of the class of recognizable languages of simply typed $\lambda$-terms are known, and it would be desirable to have more evidence that it is robust. Moreover, if we know that this class of languages is a truly canonical object, then so is its Stone dual, namely the recently introduced space of profinite $\lambda$-terms~\cite{entics:12280}.
Currently, there exist two definitions of recognizable languages of $\lambda$-terms, both provided in~\cite{DBLP:conf/wollic/Salvati09}:
\begin{itemize}
    \item The first one~\cite[Definition~1]{DBLP:conf/wollic/Salvati09} uses denotational semantics into the finite standard models of the simply typed $\lambda$-calculus. This may be understood as generalizing to higher orders the computational aspects of deterministic finite automata. In the same vein, the concrete construction of profinite $\lambda$-terms in~\cite{entics:12280} depends on specific properties of the category~$\FinSet$ of finite sets and functions between them.
    \item The second one, grounded in intersection types~\cite[\S4]{DBLP:conf/wollic/Salvati09}, turns out to admit an equivalent presentation in terms of a denotational semantics in finite domains~\cite[Theorem~25]{DBLP:books/hal/Salvati15}. Indeed, the connection between intersection types and semantics is standard, see e.g.~\cite{ronchidellarocca:LIPIcs:2018:9861}.
\end{itemize}

Both definitions can be seen as an instance of the following pattern: the interpretation of a simply typed $\lambda$-term in some denotational model with a \textit{finitary} flavor should suffice to know whether the term belongs to our language of interest. This is closely analogous to the algebraic definition of regular word languages. Indeed, viewing the set $\Sigma^*$ of words over a finite alphabet $\Sigma$ as the free monoid generated by $\Sigma$, a language $L \subseteq \Sigma^*$ is regular if and only if, for some homomorphism $\varphi\colon\Sigma^*\to M$ to a finite monoid $M$, the \enquote{interpretation} $\varphi(w)$ determines\footnote{More formally, $L = \varphi^{-1}(P)$ for some $P \subseteq M$.} whether $w\in L$ for each $w\in\Sigma^*$. Asking for $\varphi$ to be a homomorphism parallels the compositionality property of denotational semantics.

We may therefore ask:
\begin{itemize}
    \item What kind of semantics yield the same notion of recognizable language? More precisely, with a definition of \salvatiReg{\C} language for any cartesian closed category (CCC) $\C$, i.e.\ any categorical model of the simply typed $\lambda$-calculus with products, the question becomes: when do recognizability by $\C$ and by $\FinSet$ coincide?
    \item Alternatively, is there any characterization of regular languages of $\lambda$-terms that does not involve denotational semantics?
\end{itemize}
For the latter, we propose a positive answer inspired by the following result of Hillebrand and Kanellakis~\cite[Theorem~3.4]{DBLP:conf/lics/HillebrandK96}: a language of words is regular if and only if it can be decided by a simply typed $\lambda$-term operating on Church-encoded words. By replacing the type of Church encodings with any simple type $A$, we get a natural notion of \hkReg language of $\lambda$-terms, defined by means purely internal to the simply typed $\lambda$-calculus.

\subparagraph*{Contributions and proof methods.}

The main results of this paper are as follows:
\begin{description}
    \item[\Cref{thm:semantic-evaluation}] any "non-degenerate" cartesian closed category can recognize at least every language which is \hkReg.
    \item[\Cref{thm:salvati-points-finset}] every language recognized by a "locally finite" and "well-pointed" CCC -- in other words, a finitary extensional model of the simply typed $\lambda$-calculus -- is recognized by $\FinSet$. This was first stated by Salvati without proof in~\cite[Lemma~20]{DBLP:books/hal/Salvati15}.
    \item[\Cref{thm:hk-squeezing}] every language recognized by $\FinSet$ is \hkReg.
\end{description}
These three theorems, taken together, show that all "non-degenerate", "well-pointed" and "locally finite" CCCs yield the same notion of regular language of $\lambda$-terms, which is the same as the syntactic one.

To achieve this goal, we introduce a construction on CCCs, which we call "squeezing", and combine it with the standard categorical account of logical relations based on sconing. Indeed, Salvati claims in~\cite{DBLP:books/hal/Salvati15} that \Cref{thm:salvati-points-finset} can be established via logical relations, and it turns out that this falls out directly from our squeezing construction; but its versatility also allows us to apply it to prove the more diffcult \Cref{thm:hk-squeezing} on syntactic recognizability.

\subparagraph*{Related work.}

Morally, the study of regular languages of $\lambda$-terms amounts to understanding what information can be extracted by evaluating simply typed $\lambda$-terms in finitary models. A seminal result in this spirit is Statman's finite completeness theorem~\cite{DBLP:journals/jsyml/Statman82}, which can be rephrased as the regularity of all singleton languages of $\lambda$-terms -- a perspective that has led to a simplified proof~\cite{DBLP:journals/ipl/SrivathsanW12}. The idea of using another CCC than $\FinSet$, easier to use to show Statman's theorem, has been exploited in \cite{DBLP:journals/iandc/KobeleS15}. This shows the advantage to use an appropriate CCC to recognize a given language, a possibility which is extended to a vast class of CCCs in this paper (see \Cref{prop:regular-languages-product} for an example of application).

Finitary semantics are powerful tools, in particular, for understanding the computational power of the simply typed $\lambda$-calculus. For instance, in~\cite{DBLP:conf/lics/HillebrandK96}, Hillebrand and Kanellakis use the finite set semantics to prove their aforementioned theorem on regular word languages; as for finite Scott domains presented as intersection types, they have been applied by Terui~\cite{Terui} to study the complexity of normalizing simply typed $\lambda$-terms.

As can be seen \textit{inter alia} from rather surprising results of Statman~\cite{DBLP:journals/apal/Statman04} and Plotkin~\cite{DBLP:journals/corr/abs-2206-08413}, finitary models are also useful to tame the infinitary aspects of an extension of the simply typed $\lambda$-calculus with a fixed-point operator, called the $\lambda Y$-calculus.
Furthermore, the well-studied higher-order model checking problem is about testing regular properties on infinite trees that can be Church-encoded in the $\lambda Y$-calculus; it sits at the interface between automata and programming languages, with applications to the formal verification of functional programs (see e.g.~\cite{DBLP:conf/ppdp/Kobayashi19}). Decidability of higher-order model checking, first established by Ong through game semantics~\cite{DBLP:conf/lics/Ong06}, now admits proofs based on intersection types~\cite{DBLP:journals/jacm/Kobayashi13,DBLP:conf/lics/Ong15} and on finitary semantics~\cite{DBLP:journals/siglog/Walukiewicz16,grellois:tel-01311150}. Drawing on this line of work, higher-order parity automata~\cite{HOParity} generalize to $\lambda Y$-terms the recognizable languages of simply typed $\lambda$-terms.

The theme of syntactic recognizability \textit{à la} Hillebrand and Kanellakis, for its part, has been recently revived in Nguy\~{\^e}n and Pradic's implicit\footnote{The name is a nod to implicit computational complexity, a field concerned with alternative definitions of complexity classes that avoid low-level machine models and explicit resource bounds. As an example, in addition to their result on regular word languages in the simply typed $\lambda$-calculus, Hillebrand and Kanellakis's paper~\cite{DBLP:conf/lics/HillebrandK96} also contains characterizations of the $k$-EXPTIME and $k$-EXPSPACE hierarchies based on $\lambda$-terms, that are again proved by evaluation in finite sets.} automata theory. They use substructural $\lambda$-calculi and "Church encodings" to characterize star-free languages~\cite{iatlc1} and classes of string-to-string functions computed by transducers~\cite{titoPhD}.

\subparagraph*{Plan of the paper.}

We start by recalling in \Cref{sec:recognition} the semantics of the simply typed $\lambda$-calculus, the notion of language recognized by a CCC as defined in \cite{DBLP:conf/wollic/Salvati09} for finite sets, and by introducing the notion of \hkReg language, generalizing recognition as defined in \cite{DBLP:conf/lics/HillebrandK96}. In \Cref{thm:semantic-evaluation} of \Cref{sec:semantic-evaluation}, we show that every "non-degenerate" CCC recognizes all \hkReg languages. In \Cref{sec:relations-squeezing}, we recall the definition of logical relations and introduce the squeezing construction~$\Sqz{-}$ which will be a crucial tool for the two next sections. In \Cref{sec:partial-surjections}, we recall  the definition of "locally finite" and "well-pointed" CCCs, and show in \Cref{thm:salvati-points-finset} that CCCs enjoying both conditions do not recognize more languages than finite sets do. In \Cref{thm:hk-squeezing} of \Cref{sec:fin-encoding}, we show that languages recognized by finite sets are \hkReg. We finish this paper by giving some consequences of the equivalence established by these three theorems in \Cref{sec:regular-languages}.

\section{Languages of \pdflambda-terms}
\label{sec:recognition}

\paragraph*{Syntax and semantics}

\AP
We first specify the syntax we are working with. The grammars of types and preterms are
\[
    A, B
    ::=
    \tyo
    \mid
    A \tto B
    \mid
    A \times B
    \mid
    1
    \quand
    t, u
    ::=
    x
    \mid
    \lambda(x : A).\,t
    \mid
    t\ u
    \mid
    \langle t, u\rangle
    \mid
    t_i \text{ for $i = 1, 2$}
\]
and we consider the usual typing rules and $\beta\eta$-conversion rules, see e.g.~\cite[\S4.1]{DBLP:books/daglib/0093287-amadio-curien}. We extend the notation of~$t_i$, for the projection to the $i^\text{th}$ coordinate, to the case where $t$ is of type $A_1 \times \dots \times A_n$ and $i$ is between $1$ and $n$. As the $\lambda$-abstractions are annotated, a closed $\lambda$-term has at most one type derivation. For any simple type $A$, we write~$\intro*\Tm{A}$ for the set of closed simply typed $\lambda$-terms of type $A$, taken modulo~$\beta\eta$-conversion. 

We recall the semantics of the simply typed $\lambda$-calculus into cartesian closed categories, abbreviated as CCC, see \cite[Chapter~4]{DBLP:books/daglib/0093287-amadio-curien} for more details. For any CCC $\C$, object~$c$ of~$\C$ and simple type~$A$, we define an object~$\semty{A}{c}$ of~$\C$ by induction on~$A$ as follows:
\[
    \semty{\tyo}{c}
    \ :=\ 
    c
    \qquad
    \semty{A \To B}{c}
    \ :=\ 
    \semty{A}{c} \To \semty{B}{c}
    \qquad
    \semty{A \times B}{c}
    \ :=\ 
    \semty{A}{c} \times \semty{B}{c}
    \qquad
    \semty{1}{c}
    \ :=\ 
    1
\]
Using the CCC structure of $\C$, one can define a family of set-theoretic functions
\[
    \semtm{{-}}{c}
    \quad:\quad
    \Tm{A}
    \ \longto\ 
    \C(1, \semty{A}{c})
    \qquad
    \text{for every simple type~$A$}
\]
called semantic brackets, sending closed~$\lambda$-terms to points of the objects~$\semty{A}{c}$.

These assignments can be described in another way. Let~$\Lam$ be the category whose objects are simple types and whose set of morphisms from~$A$ to~$B$ is~$\Tm{A \To B}$, with the expected composition. This category is the free CCC on one object, i.e., for every CCC~$\C$ and object~$c$ of~$\C$, there exists a unique CCC functor~$\semtm{-}{c} : \Lam \to \C$ such that~$\semtm{\tyo}{c}=c$. This can be represented by the commutativity of the following diagram:
\begin{equation}
    \label{diag:lam-universal-property}
\begin{tikzcd}[ampersand replacement=\&]
    \Lam \\
    1 \& \C
    \arrow["c"', from=2-1, to=2-2]
    \arrow["\tyo", from=2-1, to=1-1]
    \arrow["{\semtm{-}{c}}", dashed, from=1-1, to=2-2]
\end{tikzcd}
\end{equation}

In this paper, the CCCs come with specified terminal object, cartesian products, and exponentials, and CCC functors are required to respect these structures strictly, \emph{on the nose}. In that way, the unicity in the universal property of $\Lam$ depicted in \Cref{diag:lam-universal-property} holds up to equality, and not merely isomorphism.

We write $\FinSet$ for the cartesian closed category of finite sets. The semantics of the simply typed $\lambda$-calculus in this CCC corresponds to its naive set-theoretic interpretation. For ease of notation, we identify the finite set $Q$ with the set of functions $\FinSet(1, Q)$.

\paragraph*{Recognizable languages of~\pdflambda-terms, semantically}

We now define the notion of \salvatiReg{\C} language of $\lambda$-terms, for any CCC $\C$. The case $\C = \FinSet$ corresponds to the notion of regular language of simply typed $\lambda$-terms introduced in \cite[Definition~1]{DBLP:conf/wollic/Salvati09}.

\begin{definition}
    \label{def:reg-sem}\AP
    Let~$\C$ be a CCC and~$c$ be an object of~$\C$. For every simple type~$A$ and subset~$F \subseteq \C(1, \semty{A}{c})$, the language~$\Lang{F}$ of~$\lambda$-terms of type~$A$ is defined as
    \[
        \intro*\Lang{F}
        \quad:=\quad
        \{t \in \Tm{A} \mid \semtm{t}{c} \in F\}
        \ .
    \]
    We define the set~$\Rec[c]{A}$ of languages of~$\lambda$-terms of type~$A$ recognized by~$c$ as
    \[
        \intro*\Rec[c]{A}
        \quad:=\quad
        \{\Lang{F} : F \subseteq \C(1, \semty{A}{c})\}
        \ .
    \]
    Finally, a language $L$ of~$\lambda$-terms of type~$A$ is \intro*\salvatiReg{\C} if there exists an object $c$ of $\C$ such that $L$ belongs to the set $\Rec[c]{A}$.
\end{definition}

\begin{example}
    \label{ex:church-encoding}\AP
    For any natural number $n$, we define the associated simple type
    \[
        \intro*\Church{n}
        \ :=\ (\tyo \tto \tyo)^n \tto \tyo \tto \tyo
        \ .
    \]
    There is a bijection between the sets~$\Tm{\Church{n}}$ and $\{1, \dots, n\}^*$, the set of finite words over an alphabet with $n$ letters, called the ""Church encoding"". For example, the word $12212$ over the two letter-alphabet $\{1, 2\}$ is encoded as the $\lambda$-term
    \[
        \lambda (a : (\tyo \tto \tyo)^2).\,\lambda(e : \tyo).\ a_2\ (a_1\ (a_2\ (a_2\ (a_1\ e))))
        \quad\in\quad\Tm{\Church{2}}
        \ .
    \]
    Under this bijection, a language of $\lambda$-terms of type $\Church{n}$ is \salvatiReg{\FinSet}, in the sense of \Cref{def:reg-sem}, if and only if the language of words associated by the "Church encoding" is a regular language of finite words, see \cite[Proposition~7.1]{entics:12280}.
\end{example}

\begin{example}
    \label{ex:odd-salvati}\AP
    We give a detailed example using the "Church encoding". We show that the language $L$ of $\lambda$-terms of type $\Church{2}$ which are encodings of words in $\{1,2\}^*$ that contain an even number of 1s and an odd number of 2s, is \salvatiReg{\FinSet}.
    
    Let $Q$ be the finite set $\{\stateTrue, \stateFalse\}$ and $F$ be the subset of $\semty{\Church{2}}{Q}$ defined as
    \[
        F
        \ :=\ 
        \{
            f \in (Q \tto Q) \times(Q \tto Q) \tto Q \tto Q
            \mid
            f(\swap,\Id_Q)(\stateTrue) = f(\Id_Q,\swap)(\stateFalse) = \stateTrue
        \}
    \]
    where $\swap : Q \to Q$ is the function defined as $\swap(\stateTrue) = \stateFalse$ and $\swap(\stateFalse) = \stateTrue$. The language $L$ is equal to $\Lang{F}$ which belongs to $\Rec[Q]{\Church{2}}$, so $L$ is \salvatiReg{\FinSet}.

    The idea is that, given the semantic interpretation $f \in \semty{\Church2}{Q}$ of the encoding of a word $w \in \{1,2\}^*$, the states $f(\swap,\Id_Q)(\stateTrue)$ and $f(\Id_Q, \swap)(\stateFalse)$ are the states reached respectively, after reading $w$, in the two following deterministic finite automata:
    \[
        \begin{adjustbox}{valign=M}
            \begin{tikzpicture}[shorten >=1pt,node distance=2cm,on grid,auto]
            \tikzset{every state/.style={minimum size=3pt}} 
          
            \node[state, initial, accepting]   (qbot) {$\;\stateTrue\;$};
            \node[state] (qtop) [right of=qbot] {$\stateFalse$};
          
            \path[->]
            (qbot) edge [bend left] node {1}  (qtop)
            (qtop) edge [bend left] node {1}  (qbot)
            (qbot) edge [loop above] node {2}  (qbot)
            (qtop) edge [loop above] node {2}  (qtop);
        \end{tikzpicture}
    \end{adjustbox}
        \qquad\qquad
        \begin{adjustbox}{valign=M}
            \begin{tikzpicture}[shorten >=1pt,node distance=2cm,on grid,auto]
            \tikzset{every state/.style={minimum size=3pt}} 
          
            \node[state, accepting]   (qbot) {$\;\stateTrue\;$};
            \node[state, initial right] (qtop) [right of=qbot] {$\stateFalse$};
          
            \path[->]
            (qbot) edge [bend left] node {2}  (qtop)
            (qtop) edge [bend left] node {2}  (qbot)
            (qbot) edge [loop above] node {1}  (qbot)
            (qtop) edge [loop above] node {1}  (qtop);
        \end{tikzpicture}
    \end{adjustbox}
    \]
    The language $L$ is the intersection of the two languages recognized by these automata.
\end{example}

\begin{example}
    \label{ex:affine-terms}\AP
We consider the simple type
\[
    \intro*\UntypedTerms
    \quad:=\quad
    ((\tyo \tto \tyo) \tto \tyo) \tto (\tyo \tto \tyo \tto \tyo) \tto \tyo
    \ .
\]
There is a canonical bijection -- which is classical, see e.g.~\cite{DBLP:conf/tlca/Atkey09} for an in-depth treatment -- between $\Tm{\UntypedTerms}$ and the set of closed untyped $\lambda$-terms modulo $\alpha$-renaming, i.e.\ syntax trees with binders, without $\beta$-conversion. Here are examples of encodings of the latter into the former (for the general definition, see \Cref{appendix:untyped-affine}):
\begin{align*}
    \lambda x.\,x\,x
    \quad\rightsquigarrow\quad&
    \colohide{\lambda(\ell : (\tyo \tto \tyo) \tto \tyo).\,\lambda(a : \tyo \tto \tyo \tto \tyo).}\,
    \\
    &\quad
    \colohide{\ell\ (}\lambda\colohide{(}x \colohide{\ : \tyo)}.\,\colohide{a}\ x\ x\colohide{)}
    \\
    (\lambda x.\,x\,x)\ (\lambda x.\,x\,x)
    \quad\rightsquigarrow\quad&
    \colohide{\lambda(\ell : (\tyo \tto \tyo) \tto \tyo).\,\lambda(a : \tyo \tto \tyo \tto \tyo).}\,
    \\
    &\quad
    \colohide{a\ (\ell}\ (\lambda\colohide{(}x \colohide{\ : \tyo)}.\,\colohide{a}\ x\ x)\colohide{)}\ \colohide{(\ell}\ (\lambda\colohide{(}x \colohide{\ : \tyo)}.\,\colohide{a}\ x\ x)\colohide{)}
    \\
    \lambda f.\,(\lambda x.\,x\,x)\,(\lambda x.\,f\,(x\,x))
    \quad\rightsquigarrow\quad&
    \colohide{\lambda(\ell : (\tyo \tto \tyo) \tto \tyo).\,\lambda(a : \tyo \tto \tyo \tto \tyo).}\,
    \\
    &\quad
    \colohide{\ell\ (}\lambda\colohide{(}f \colohide{\ : \tyo)}.\,\colohide{a\ (\ell}\ 
    (\lambda\colohide{(}x \colohide{\ : \tyo)}.\,\colohide{a}\ x\ x)\colohide{)}\\
    &\quad
    \phantom{\colohide{\ell\ (}\lambda\colohide{(}f \colohide{\ : \tyo)}.\,\colohide{a\ (\ell}}\
    \colohide{(\ell}\ (\lambda\colohide{(}x \colohide{\ : \tyo)}.\, \colohide{a}\ f\ (\colohide{a}\ x\ x))\colohide{))}
\end{align*}
This can be seen as an extension of Church encodings to higher-order abstract syntax: indeed, the variable $\ell$ plays the role of a constructor and introduces a bound variable.

A closed untyped term is \emph{affine} if and only if every bound variable occurs at most once. We now give the outline of the proof, detailed in \Cref{appendix:untyped-affine}, that the encodings in $\Tm{\UntypedTerms}$ of closed untyped affine terms form a \salvatiReg{\FinSet} language.

Let $Q$ be the finite set $\{0,1,\infty\}\times\{\top,\bot\}$, where $\{0,1,\infty\}$ is seen as the additive monoid $\N$ truncated to $2 = 3 = \dots = \infty$. We consider the two set-theoretic functions $\fapp : Q \to (Q \tto Q)$ and $\fabs : (Q \tto Q) \to Q$ defined as
\[
    \fapp(k, b)(k', b')
    \ :=\ 
    (k + k',\; b \wedge b')
    \quand
    \fabs(g)
    \ :=\ 
    (g_1(0, \top),\; g_2(0, \top) \wedge (g_1(1, \top) \le 1))
\]
where $g_1 : Q \to \{0, 1, \infty\}$ and $g_2 : Q \to \{\top, \bot\}$ are the compositions of $g : Q \to Q$ with the two projections. We verify that, for any closed untyped term $t$, we have
\[
    \semtm{t}{Q}(\fabs)(\fapp)
    \quad=\quad
    (0, b)
    \qquad
    \text{where $b$ is $\top$ if and only if $t$ is affine.}
\]
Therefore, if $F$ is the subset of $\semty{\UntypedTerms}{Q}$ defined as
\[
    F
    \ :=\ 
    \{
        s \in ((Q \tto Q) \tto Q) \tto (Q \tto Q \tto Q) \tto Q
        \mid
        s(\fabs)(\fapp) = (0, \top)
    \}
\]
then the \salvatiReg{\FinSet} language $\Lang{F}$ of terms of type $\UntypedTerms$ is the language of affine terms.

To the best of our knowledge, this regularity result is original. It also strongly suggests that the notion of \salvatiReg{\FinSet} language, when applied to the type $\UntypedTerms$ of syntax trees with binders, differs from the recognizability of these syntax trees by the nominal tree automata of~\cite[\S3.1]{vanheerdt_et_al:LIPIcs:2019:11434}. Indeed, nominal automata cannot recognize the language of data words whose letters are all different~\cite[Proof of Lemma 5.4]{DBLP:journals/corr/BojanczykKL14}.
\end{example}

\begin{remark}
    \label{rmk:CCC-functor-lambda-terms}
    Let $\C$ and $\D$ be CCCs and $G : \C \to \D$ be a CCC functor. By the universal property of $\Lam$, see \Cref{diag:lam-universal-property}, for every object $c$ of $\C$, the following diagram commutes:
\[
\begin{tikzcd}[ampersand replacement=\&]
	\Lam \\
	1 \& \C \& \D
	\arrow["\tyo", from=2-1, to=1-1]
	\arrow["c"', from=2-1, to=2-2]
	\arrow["G"', from=2-2, to=2-3]
	\arrow["{\semtm{-}{c}}", from=1-1, to=2-2]
	\arrow["{\semtm{-}{G(c)}}", curve={height=-18pt}, from=1-1, to=2-3]
\end{tikzcd}
\]
In particular, this means that for every simple type $A$, the objects $\semty{A}{G(c)}$ and $G(\semty{A}{c})$ are equal, and that for every simply typed $\lambda$-terms $t$ and $t'$ of type $A$,
\[
    \text{if}\quad
    \semtm{t}{c}\ =\ \semtm{t'}{c}
    \qquad
    \text{then}\quad
    \semtm{t}{G(c)}\ =\ \semtm{t'}{G(c)}
    \ .
\]
This means that interpreting the simply typed $\lambda$-calculus at the object $c$ will always be at least as fine as interpreting it at~$G(c)$. In particular, this entails that $\Rec[G(c)]{A} \subseteq \Rec[c]{A}$. If $G$ is moreover faithful, we then have that $\Rec[G(c)]{A} = \Rec[c]{A}$ for any object $c$ of $\C$. Therefore, all languages which are \salvatiReg{\C} are \salvatiReg{\D}.
\end{remark}

\paragraph*{Recognizable languages of~\pdflambda-terms, syntactically}

A syntactic approach to recognition is described in \cite{DBLP:conf/lics/HillebrandK96}. This syntactic approach uses the type substitution, also called cast, whose definition we now recall.

\begin{definition}
    If~$A$ and~$B$ are simple types, we define a simple type~$A[B] = A\{\tyo := B\}$ by replacing every occurence of~$\tyo$ in $A$ by~$B$. We extend this to $\lambda$-terms by induction:
        \begin{align*}
            x[B]
        \ &:=\ 
        x
        &
        (\lambda (x : A).\,t)[B]
        \ &:=\ 
        \lambda (x : A[B]).\,t[B]
        &
        (t\ u)[B]
        \ &:=\ 
        t[B]\ u[B]
        \\
        &&
        \langle t, u\rangle[B]
        \ &:=\ \langle t[B], u[B]\rangle
        &
        t_i[B]
        \ &:=\ 
        (t[B])_i
        \end{align*}
\end{definition}
\begin{claim}
    For every simple types $A$ and $B$ and $t \in \Tm{A}$, we have $t[B] \in \Tm{A[B]}$.
\end{claim}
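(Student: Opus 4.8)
The plan is to strengthen the statement to open terms so that an induction on the typing derivation goes through, and then to record separately that the cast is well-defined on $\beta\eta$-equivalence classes. First I would extend the type-level cast to typing contexts: for a context $\Gamma = (x_1 : A_1, \dots, x_n : A_n)$, set $\Gamma[B] := (x_1 : A_1[B], \dots, x_n : A_n[B])$. The generalized claim then reads: if $\Gamma \vdash t : A$ is derivable, then $\Gamma[B] \vdash t[B] : A[B]$ is derivable. Taking $\Gamma$ empty recovers the statement for closed terms.

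The proof is by induction on the derivation of $\Gamma \vdash t : A$. The only real content is the compatibility of the type-level cast with the type constructors, which is immediate from its definition as the substitution $\{\tyo := B\}$: one has $(A' \tto C)[B] = A'[B] \tto C[B]$, $(A' \times C)[B] = A'[B] \times C[B]$, and $1[B] = 1$. Given these equations, each typing rule is preserved, because the term-level cast commutes with every term constructor by definition. For the abstraction case, the induction hypothesis gives $\Gamma[B], x : A'[B] \vdash t[B] : C[B]$; since $(\lambda(x:A').t)[B] = \lambda(x:A'[B]).t[B]$ and $(A' \tto C)[B] = A'[B] \tto C[B]$, the abstraction rule yields $\Gamma[B] \vdash (\lambda(x:A').t)[B] : (A' \tto C)[B]$. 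The application, pairing, projection, and variable cases follow the same pattern, using the corresponding equation on types.

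Finally, since $\Tm{A}$ consists of terms taken modulo $\beta\eta$, I would check that $t \mapsto t[B]$ descends to equivalence classes, i.e.\ that $t =_{\beta\eta} t'$ implies $t[B] =_{\beta\eta} t'[B]$, so that $t[B]$ is a well-defined element of $\Tm{A[B]}$. The key lemma is that the cast commutes with capture-avoiding substitution, $t\{u/x\}[B] = t[B]\{u[B]/x\}$, proved by a routine induction on $t$. From this, a $\beta$-step $(\lambda(x:A').t)\,u \to t\{u/x\}$ is sent to the $\beta$-step $(\lambda(x:A'[B]).t[B])\,u[B] \to t[B]\{u[B]/x\}$, and the $\eta$ rule is handled analogously since the cast merely rewrites type annotations and leaves the underlying syntax tree unchanged. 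I expect this well-definedness verification, rather than the typing induction, to be the only part demanding any attention — though even it is entirely mechanical, which is presumably why the statement is labelled a claim rather than a lemma.
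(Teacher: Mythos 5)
Your proof is correct and complete: the generalization to open terms with a casted context, the induction on the typing derivation using $(A' \tto C)[B] = A'[B] \tto C[B]$ (and its analogues for products and the unit), and the separate verification that casting commutes with capture-avoiding substitution and hence preserves $\beta$- and $\eta$-steps, are exactly what a full syntactic proof requires. For comparison, the paper itself offers no proof at all -- the statement is labelled a claim precisely because it is regarded as routine -- but the remark immediately following it points at a slicker justification: casting is the unique CCC functor $(-)[B] \colon \Lam \to \Lam$ sending $\tyo$ to $B$, i.e.\ the semantic bracket $\semtm{-}{B}$ obtained from the universal property of $\Lam$ as the free CCC on one object. From that viewpoint, well-typedness and invariance under $\beta\eta$-conversion come for free, since the morphisms of $\Lam$ \emph{are} $\beta\eta$-classes and a functor acts on them directly; what is not free is the identification of that functor's action with the syntactic, annotation-rewriting definition of $t[B]$, and checking this agreement is essentially the induction you carried out. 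So your argument is the elementary route, and it is also the missing half that makes the paper's categorical remark an actual proof of the claim; the two accounts complement rather than duplicate each other.
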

\begin{remark}
    A more categorical way to understand casting is to see it as the unique CCC functor~$(-)[B] : \Lam \to \Lam$ such that the following diagram commutes:
    \[
\begin{tikzcd}[ampersand replacement=\&]
	\Lam \\
	1 \& \Lam
	\arrow["B"', from=2-1, to=2-2]
	\arrow["\tyo", from=2-1, to=1-1]
	\arrow["{(-)[B]}", dashed, from=1-1, to=2-2]
\end{tikzcd}
\]
As such, it is the semantic bracket functor $\semtm{-}{B} : \Lam \to \Lam$.
\end{remark}

Finally, we recall the encoding of Booleans into the simply typed $\lambda$-calculus.

\begin{definition}
    \label{def:bool}\AP
    Let~$\intro*\Bool$ be the simple type~$\tyo^2 \tto \tyo$. Its two inhabitants are the $\lambda$-terms
    \[
        \intro*\true
        \ :=\ 
        \lambda(x : \tyo^2).\,x_1
        \qquand
        \intro*\false
        \ :=\ 
        \lambda(x : \tyo^2).\,x_2
        \ .
    \]
\end{definition}

The following definition naturally generalizes the one given in \cite{DBLP:conf/lics/HillebrandK96} for $A=\Church{n}$.

\begin{definition}
    \label{def:reg-syn}\AP
    For any simple type $A$, a language $L \subseteq \Tm{A}$ is \intro*\hkReg if there exists a simple type $B$ and a $\lambda$-term $r$ of type $A[B] \tto \Bool$ such that
    \[
        L
        \ =\ 
        \{t \in \Tm{A} \mid r\ t[B] \eqbe \true\}
        \ .
    \]
\end{definition}

\begin{theorem}[Hillebrand \& Kanellakis, {{\cite[Theorem~3.4]{DBLP:conf/lics/HillebrandK96}}}] 
    \label{thm:hk-paper}
    A language of $\lambda$-terms of type $\Church{n}$ is \hkReg if and only if the associated language of finite words by the "Church encoding" is regular in the usual sense.
\end{theorem}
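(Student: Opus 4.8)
The plan is to prove the two implications by different means, using \Cref{ex:church-encoding} --- which identifies the \salvatiReg{\FinSet} languages of type $\Church{n}$ with the regular word languages --- as a bridge in one direction, and a direct automaton simulation in the other.

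For the implication from regularity to \hkReg-ness, suppose the word language is recognized by a deterministic finite automaton with state set $\{1, \dots, m\}$, initial state $q_0 \in \{1, \dots, m\}$, accepting set $\mathrm{Acc}$, and a transition function $\delta_j$ on states for each letter $j \in \{1, \dots, n\}$. I would take the casting type to be $B := \tyo^m \tto \tyo$. The key point is that the closed $\beta\eta$-normal terms of type $B$ are exactly the $m$ projections $\pi_i := \lambda(x : \tyo^m).\,x_i$, so that the state $i$ can be identified with $\pi_i$. Each transition $\delta_j$ is then realized by the closed term $\widehat{\delta_j} := \lambda(p : B).\,\lambda(x : \tyo^m).\,p\,\langle x_{\delta_j(1)}, \dots, x_{\delta_j(m)}\rangle$ of type $B \tto B$, which sends $\pi_i$ to $\pi_{\delta_j(i)}$; note that this works for non-injective $\delta_j$ as well. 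Acceptance is realized by a term $\mathrm{acc} : B \tto \Bool$ sending $\pi_i$ to $\true$ when $i \in \mathrm{Acc}$ and to $\false$ otherwise. Setting
\[
    r \ :=\ \lambda(w : \Church{n}[B]).\ \mathrm{acc}\,\bigl(w\,\langle \widehat{\delta_1}, \dots, \widehat{\delta_n}\rangle\,\pi_{q_0}\bigr)
\]
makes evaluating $r$ on a cast Church encoding run the automaton, since the encoding performs the iterated composition of the $\widehat{\delta_j}$ in reading order. Using the bijection $\Tm{\Church{n}} \cong \{1, \dots, n\}^*$ of \Cref{ex:church-encoding}, I would conclude that $\{t \mid r\ t[B] \eqbe \true\}$ is exactly the set of encodings of accepted words, so the language is \hkReg.

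For the converse, suppose $L = \{t \in \Tm{\Church{n}} \mid r\ t[B] \eqbe \true\}$ for some type $B$ and term $r : \Church{n}[B] \tto \Bool$. The crucial observation is that casting is base change of the interpretation of $\tyo$: since $(-)[B] = \semtm{-}{B}$ is a CCC functor and CCC functors compose, the uniqueness in the universal property of $\Lam$ gives $\semtm{t[B]}{Q_0} = \semtm{t}{\semty{B}{Q_0}}$ for every closed term $t$ and every finite set $Q_0$. I would fix $Q_0 := \{0, 1\}$ and set $c := \semty{B}{Q_0}$, a finite set. Since $\Bool$ also has exactly two closed inhabitants, namely $\true$ and $\false$, and since $\semtm{\true}{Q_0} \neq \semtm{\false}{Q_0}$, the equivalence $r\ t[B] \eqbe \true \iff \semtm{r}{Q_0}(\semtm{t}{c}) = \semtm{\true}{Q_0}$ holds, where $\semtm{r}{Q_0}$ is regarded as a function and where I use compositionality of the semantics together with the fact that $r\,t[B]$, being a closed term of type $\Bool$, is $\beta\eta$-equal either to $\true$ or to $\false$. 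Hence $L = \Lang{F}$ for $F := \{x \mid \semtm{r}{Q_0}(x) = \semtm{\true}{Q_0}\} \subseteq \FinSet(1, \semty{\Church{n}}{c})$, so $L$ is \salvatiReg{\FinSet} and \Cref{ex:church-encoding} yields that the associated word language is regular.

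The main obstacle lies in the first implication: one must exhibit the automaton purely inside the $\lambda$-calculus, and the genuine content is the remark that $\tyo^m \tto \tyo$ has precisely $m$ closed inhabitants and that arbitrary reindexings of a tuple realize all state maps, including non-injective ones. The second implication is conceptually the more important half but technically light: once one recognizes that syntactic casting is exactly base change of the base type and that $\Bool$ is inert --- carrying only the two closed terms $\true$ and $\false$ --- a two-element semantics faithfully reflects $\beta\eta$-equality to $\true$, and everything reduces to the already-available \Cref{ex:church-encoding}.
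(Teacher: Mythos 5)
Your proposal is correct, and both halves are sound. One point of order first: the paper does not actually prove \Cref{thm:hk-paper} --- it cites Hillebrand and Kanellakis for it --- but its own machinery yields the statement as the special case $A = \Church{n}$ of \Cref{thm:semantic-evaluation} and \Cref{thm:hk-squeezing}, combined with \Cref{ex:church-encoding}. Measured against that route, your second implication is essentially identical to the paper's: the casting-as-base-change identity $\semtm{t[B]}{Q_0} = \semtm{t}{\semty{B}{Q_0}}$ and the injectivity of $\semtm{-}{Q_0}$ on $\Tm{\Bool}$ for $Q_0 = \{0,1\}$ are exactly the two ingredients of the proof of \Cref{thm:semantic-evaluation}, instantiated at $\C = \FinSet$. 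Your first implication is the original Hillebrand--Kanellakis construction, and it is instructive to see how it sits inside the paper's generalization: your type $B = \tyo^m \tto \tyo$ is the paper's $\Fin{m}$ (\Cref{def:fin}), your transition terms $\widehat{\delta_j}$ are literally the terms $\Fin{\delta_j}$, and your observation that the closed normal inhabitants of $B$ are the $m$ projections is the bijection $\logrel{m}$ of \Cref{def:bij-logrel}. The genuine difference lies in how correctness of $r$ is established: you verify it by $\beta$-reducing on the explicit normal forms of terms of type $\Church{n}$, i.e.\ you use the bijection $\Tm{\Church{n}} \cong \{1,\dots,n\}^*$ to enumerate all possible inputs. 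That enumeration is available at the word type but at essentially no other type; this is precisely why \Cref{thm:hk-squeezing}, which must handle an arbitrary type $A$, replaces it by the fundamental lemma of logical relations together with the squeezing structure on relations between $\Lam$ and $\FinOrd$. So your proof buys concreteness and self-containedness for the word case, while the paper's approach buys uniformity in the type; it is worth noting that, had you been allowed to invoke the paper's theorems, the whole statement would follow in two lines from \Cref{thm:semantic-evaluation}, \Cref{thm:hk-squeezing} and \Cref{ex:church-encoding}.
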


\begin{example}
    The "Church encodings" of words in $\{1,2\}^*$ with an even number of 1s and an odd number of 2s is \hkReg. Indeed, we consider the following $\lambda$-terms
    \begin{align*}
        \mathsf{and}
        \ &:=\ 
        \lambda(p:\Bool\times\Bool).\,\lambda(x:\tyo\times\tyo).\,
        p_1\ \langle p_2\ x, x_2 \rangle
        \\
        \mathsf{id}
        \ &:=\ 
        \lambda(b:\Bool).\,b
        \\
        \mathsf{not}
        \ &:=\ 
        \lambda(b:\Bool).\,\lambda(x:\tyo\times\tyo).\,
        b\ \langle x_2, x_1\rangle
    \end{align*}
    and choose, as in \Cref{def:reg-syn}, the type $B$ to be $\Bool$ and the simply typed $\lambda$-term $r$ to be
\[
    \lambda(w:\Church2 [\Bool]).\,
    \mathsf{and}\ 
    \langle w\ \mathsf{not}\ \mathsf{id}\ \true,\
    w\ \mathsf{id}\ \mathsf{not}\ \false\rangle
    \quad:\quad
    \Church2 [\Bool] \tto \Bool
    \ .
\]
Just as in \Cref{ex:odd-salvati}, this term can be seen as running two DFAs, both having two states, over the encoding of an input word.
\end{example}

When restricted to types $\Church{n}$ for any natural number $n$, the notion of \salvatiReg{\FinSet} and \hkReg languages coincide, as they both boil down to the usual notion of regular language of finite words, as seen in \Cref{ex:church-encoding} and \Cref{thm:hk-paper} respectively. One of the contributions of the present paper is to show that these two notions coincide at every simple type.

\section{Syntactic recognition implies semantic recognition}
\label{sec:semantic-evaluation}

In this section, we state \Cref{thm:semantic-evaluation} and prove it by extending the semantic evaluation argument, described by Hillebrand and Kanellakis in their proof of \Cref{thm:hk-paper} for the case of languages of words, to the more general case of languages of any type.

\begin{definition}
    \label{def:non-degenerate}\AP
    A CCC $\C$ is said to be ""non-degenerate"" if there exist two objects $c',c$ of~$\C$ such that there exist two distinct morphisms $f,g\colon c' \to c$.
\end{definition}

\begin{theorem}
    \label{thm:semantic-evaluation}
    If $\C$ is a non-degenerate CCC, then any language of $\lambda$-terms which is \hkReg is \salvatiReg{\C}.
\end{theorem}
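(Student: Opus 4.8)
The plan is to carry out the semantic evaluation argument of Hillebrand and Kanellakis inside an arbitrary non-degenerate CCC $\C$. Suppose $L \subseteq \Tm{A}$ is \hkReg, witnessed by a type $B$ and a term $r$ of type $A[B] \tto \Bool$. The idea is to simulate the reduction of $r\ t[B]$, for $t \in \Tm{A}$, by evaluating $r$ at a suitable object $c$ of $\C$ and reading off a point of $\semty{\Bool}{c}$. The bridge between the syntactic cast $(-)[B]$ and semantic evaluation is the identity
\[
    \semtm{t[B]}{c} \;=\; \semtm{t}{\semty{B}{c}}
    \qquad \text{under} \qquad
    \semty{A[B]}{c} \;=\; \semty{A}{\semty{B}{c}},
\]
which I would obtain from the remark that casting is the bracket functor $\semtm{-}{B} \colon \Lam \to \Lam$: composing it with $\semtm{-}{c} \colon \Lam \to \C$ yields a CCC functor sending $\tyo$ to $\semty{B}{c}$, so by the uniqueness in the universal property of $\Lam$ (\Cref{diag:lam-universal-property}) it equals $\semtm{-}{\semty{B}{c}}$, and the two displayed equalities are its actions on the object $A$ and on the point $t$.

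Given this, I would take the recognizing object for $L$ to be $\semty{B}{c}$. Writing $\rho \colon \semty{A[B]}{c} \to \semty{\Bool}{c}$ for the morphism obtained from $\semtm{r}{c}$ by currying, set
\[
    F \;:=\; \{\, p \in \C(1, \semty{A}{\semty{B}{c}}) \mid \rho \circ p = \semtm{\true}{c} \,\}.
\]
Unfolding \Cref{def:reg-sem} at the object $\semty{B}{c}$, we have $t \in \Lang{F}$ iff $\semtm{t}{\semty{B}{c}} \in F$, i.e.\ iff $\rho \circ \semtm{t}{\semty{B}{c}} = \semtm{\true}{c}$; by compositionality of the semantics $\rho \circ \semtm{t[B]}{c} = \semtm{r\ t[B]}{c}$, so the identity above gives $t \in \Lang{F}$ iff $\semtm{r\ t[B]}{c} = \semtm{\true}{c}$. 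It then remains to check, for each $t$, the equivalence
\[
    \semtm{r\ t[B]}{c} = \semtm{\true}{c}
    \quad\Longleftrightarrow\quad
    r\ t[B] \eqbe \true,
\]
which would give $L = \Lang{F} \in \Rec[{\semty{B}{c}}]{A}$. The direction $\Leftarrow$ is just soundness of the CCC semantics with respect to $\beta\eta$. For $\Rightarrow$, I would use that $r\ t[B]$ is a closed term of type $\Bool$, hence $\beta\eta$-equal to either $\true$ or $\false$ by \Cref{def:bool}; if it were $\false$, soundness would force $\semtm{r\ t[B]}{c} = \semtm{\false}{c}$, so the only thing that can go wrong is $\semtm{\true}{c} = \semtm{\false}{c}$.

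The crux --- and the only place where non-degeneracy is used --- is therefore to choose $c$ so that $\semtm{\true}{c} \neq \semtm{\false}{c}$. Unfolding the definitions, $\semty{\Bool}{c}$ is the exponential $(c \times c) \To c$, and $\semtm{\true}{c}$, $\semtm{\false}{c}$ correspond to the two projections $\pi_1, \pi_2 \colon c \times c \to c$; so I need an object $c$ with $\pi_1 \neq \pi_2$. Non-degeneracy furnishes objects $c', c$ and distinct morphisms $f, g \colon c' \to c$, and the pairing $\langle f, g \rangle \colon c' \to c \times c$ then witnesses $\pi_1 \neq \pi_2$, since $\pi_1 \circ \langle f, g \rangle = f \neq g = \pi_2 \circ \langle f, g \rangle$. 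Taking this $c$ closes the argument. I expect the substantive content to be concentrated in this last separation step and in the cast--evaluation identity; everything else is routine bookkeeping about the CCC semantics.
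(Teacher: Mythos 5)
Your proposal is correct and follows essentially the same route as the paper's proof: recognition at the object $\semty{B}{c}$, the cast--evaluation identity $\semtm{t[B]}{c} = \semtm{t}{\semty{B}{c}}$ via the universal property of $\Lam$, the same choice of accepting set $F$, and non-degeneracy used through the pairing $\langle f, g\rangle$ to separate $\semtm{\true}{c} = \pi_1$ from $\semtm{\false}{c} = \pi_2$. Your explicit case analysis on $r\ t[B]$ being $\beta\eta$-equal to $\true$ or $\false$ is exactly what the paper packages as injectivity of $\semtm{-}{c}$ on $\Tm{\Bool}$ (the only slip is calling the passage from the point $\semtm{r}{c}$ to the morphism $\rho$ \enquote{currying} rather than uncurrying, which is immaterial).
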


\begin{proof}
    We observe first that the non-degeneracy assumption means that the two projections $\pi_1,\pi_2\colon c \times c \to c$ are not equal, since they yield different results when pre-composed with the morphism $c' \to c \times c$ obtained by pairing $f$ and $g$.

    Let~$A$ be a simple type and~$L$ be a language of~$\lambda$-terms of type~$A$ which is \hkReg. There exists a simple type~$B$ together with a~$\lambda$-term~$r \in \Tm{A[B] \tto \Bool}$ such that
\[
    L
    \quad=\quad
    \{t \in \Tm{A} \mid r\ t[B] \eqbe \true\}
    \ .
\]
In order to show that $L$ belongs to $\Rec[\semty{B}{c}]{A}$, we work with the interpretation of the simply typed~$\lambda$-calculus at the object~$\semty{B}{c}$ of~$\C$. By the universal property of the CCC~$\Lam$, the following diagram commutes
\[
\begin{tikzcd}[ampersand replacement=\&]
    \Lam \\
    1 \& \Lam \& \C
    \arrow["B", from=2-1, to=2-2]
    \arrow["{\semtm{-}{\semty{B}{c}}}", curve={height=-18pt}, from=1-1, to=2-3]
    \arrow["\tyo", from=2-1, to=1-1]
    \arrow["{(-)[B]}", from=1-1, to=2-2]
    \arrow["{\semtm{-}{c}}", from=2-2, to=2-3]
\end{tikzcd}
\ .
\]

More concretely, this states that, for every simply typed~$\lambda$-term~$t$, the two morphisms~$\semtm{t[B]}{c}$ and $\semtm{t}{\semty{B}{c}}$ are equal. By viewing $r$ as a morphism from $A[B]$ to $\Bool$ in the category~$\Lam$, the compositionality of the semantic interpretation gives us that
\[
    \semtm{r\ t[B]}{c}
    \ =\ 
    \semtm{r}{c} \circ \semtm{t[B]}{c}
    \ =\ 
    \semtm{r}{c} \circ \semtm{t}{\semty{B}{c}}
    \ .
\]
By non-degeneracy, the semantic interpretation $\semtm{-}{c} : \Tm{\Bool} \to \C(1, \semty{\Bool}{c})$ is injective: indeed, $\semtm{\true}{c} = \pi_1 \neq \pi_2 = \semtm{\false}{c}$. Therefore, we get the following equivalences
\[
    t \in L
    \ \iff\ 
    r\ t[B] \eqbe \true
    \ \iff\ 
    \semtm{r\ t[B]}{c} = \semtm{\true}{c}
    \ \iff\ 
    \semtm{t}{\semty{B}{c}} \in F
\]
where~$F$ is the subset of~$\C(1, \semty{A}{\semty{B}{c}})$ defined as $\{q \in \C(1, \semty{A}{\semty{B}{c}}) \mid \semtm{r}{c} \circ q = \semtm{\true}{c}\}$. This shows that $L$ belongs to $\Rec[\semty{B}{c}]{A}$, hence that it is a \salvatiReg{\C} language.
\end{proof}

\section{Logical relations and the squeezing construction}
\label{sec:relations-squeezing}

\paragraph*{Sconing in a nutshell}

In this paragraph, we recall the construction of a "CCC of logical relations" from one CCC to another. We first recall the construction of logical predicates, also called sconing, which will be general enough to give logical relations as a special case. This method is well-known, see for instance~\cite{DBLP:conf/csl/MitchellS92} for an introductory account.

\begin{definition}
    \label{def:ccc-logical-predicates}\AP
    Let $\C$ be a CCC. The category of logical predicates over $\C$, that we denote by~$\intro*\Pred{\C}$, is defined as follows:
    \begin{itemize}
        \item its objects are the pairs $(c, S)$ of an object $c$ of $\C$ together with a subset $S \subseteq \C(1, c)$,
        \item its morphisms from $(c, S)$ to $(c', S')$ are the morphisms $f : c \to c'$ of $\C$ such that $f \circ (-)$ restricts to a set-theoretic function $S \to S'$.
    \end{itemize}
\end{definition}
\begin{claim}
    This category $\Pred{\C}$ is a CCC, with exponentiation given by
    \[
        (c, S) \tto (c', S')
        \quad=\quad
        (c \tto c', \{f \in \C(1, c \tto c') \mid \forall s \in S, \ev_{c, c'} \circ \langle f, s\rangle \in S'\})
    \]
    The forgetful functor $(c, S) \mapsto c$ is a CCC functor.
\end{claim}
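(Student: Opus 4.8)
The plan is to equip $\Pred{\C}$ with a terminal object, binary products and exponentials whose \emph{underlying $\C$-objects and structural maps are exactly those of $\C$}, so that strictness of the forgetful functor $U\colon (c,S)\mapsto c$ becomes automatic; the only real work is to check that the $\C$-morphisms witnessing each universal property carry points of the source predicate to points of the target predicate, i.e.\ are morphisms of $\Pred{\C}$, and that this determines the universal maps uniquely. Uniqueness is free throughout, since $U$ is faithful: a morphism of $\Pred{\C}$ is nothing but a $\C$-morphism with an extra property, so two parallel $\Pred{\C}$-morphisms with equal underlying $\C$-morphisms coincide.

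First I would check the categorical axioms. The identity $\Id_c$ satisfies $\Id_c\circ s=s$, so it is a morphism $(c,S)\to(c,S)$; and if $f\colon(c,S)\to(c',S')$ and $g\colon(c',S')\to(c'',S'')$, then for $s\in S$ we have $(g\circ f)\circ s=g\circ(f\circ s)$ with $f\circ s\in S'$ and hence $g\circ(f\circ s)\in S''$, so $g\circ f$ is again a morphism of $\Pred{\C}$; associativity and the unit laws are inherited from $\C$. The terminal object is $(1,\C(1,1))$, since the unique $\C$-map $c\to 1$ sends every point of $c$ to the unique point of $1$. For the product of $(c,S)$ and $(c',S')$ I would take $(c\times c',S'')$ with
\[
    S''\ :=\ \{p\in\C(1,c\times c')\mid\pi_1\circ p\in S\ \text{and}\ \pi_2\circ p\in S'\}\ ,
\]
which makes $\pi_1,\pi_2$ morphisms of $\Pred{\C}$ by construction; given $f\colon(d,T)\to(c,S)$ and $g\colon(d,T)\to(c',S')$, the $\C$-pairing $\langle f,g\rangle$ sends $t\in T$ to $\langle f\circ t,g\circ t\rangle$, whose components lie in $S$ and $S'$, so it is a morphism $(d,T)\to(c\times c',S'')$. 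Note that every point of $c\times c'$ equals $\langle\pi_1\circ p,\pi_2\circ p\rangle$, so $S''$ is precisely the set of pairings $\langle s,s'\rangle$ with $s\in S$, $s'\in S'$; this description is what I would use for the exponential.

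The main step is the exponential, and everything hinges on one identity. Write $E:=\{f\in\C(1,c\tto c')\mid\forall s\in S,\ \ev_{c,c'}\circ\langle f,s\rangle\in S'\}$ for its predicate, so that the claimed exponential is $(c\tto c',E)$. Writing $k$ for the $\C$-transpose of a $\C$-morphism $h\colon d\times c\to c'$, so that $\ev_{c,c'}\circ(k\times\Id_c)=h$, I would record that for all points $t\colon 1\to d$ and $s\colon 1\to c$,
\[
    \ev_{c,c'}\circ\langle k\circ t,s\rangle
    \ =\
    \ev_{c,c'}\circ(k\times\Id_c)\circ\langle t,s\rangle
    \ =\
    h\circ\langle t,s\rangle\ .
\]
Using the description of the product predicate just obtained, a $\C$-morphism $h$ is a morphism $(d,T)\times(c,S)\to(c',S')$ exactly when $h\circ\langle t,s\rangle\in S'$ for all $t\in T$, $s\in S$; by the displayed identity this is equivalent to $\ev_{c,c'}\circ\langle k\circ t,s\rangle\in S'$ for all such $t,s$, i.e.\ to $k\circ t\in E$ for all $t\in T$, which is exactly the statement that $k$ is a morphism $(d,T)\to(c\tto c',E)$. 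Thus transposition restricts to a bijection between the two hom-sets. Taking $(d,T)=(c\tto c',E)$ and $k=\Id$ shows that $\ev_{c,c'}$ is itself a morphism $(c\tto c',E)\times(c,S)\to(c',S')$, and the forward direction gives, for each $h$, a transpose lying in $\Pred{\C}$; uniqueness comes from faithfulness of $U$. This establishes the exponential universal property with the stated object $E$.

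Finally, since each terminal map, projection, pairing, evaluation and transpose of $\Pred{\C}$ is by construction the corresponding datum of $\C$, the forgetful functor $U$ preserves the terminal object, products and exponentials on the nose, hence is a strict CCC functor. I expect the exponential to be the only delicate point: the two routine ingredients it rests on are the displayed transposition identity -- which is exactly what makes the definition of $E$ the correct one -- and the observation that points of a product split as pairs, so that the product predicate consists precisely of the pairings of points.
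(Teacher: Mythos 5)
Your proof is correct. The paper itself gives no argument for this claim---it simply defers to Mitchell and Scedrov---and your direct verification is exactly the standard one that citation contains: inherit all structure from $\C$ on the nose (which makes strictness of the forgetful functor automatic and gives uniqueness via faithfulness), observe that the product predicate consists precisely of pairings of points, and use the transposition identity $\ev_{c,c'} \circ \langle k \circ t, s\rangle = h \circ \langle t, s\rangle$ to show that currying restricts to a bijection on hom-sets, so that the stated predicate on $c \tto c'$ is the right one.
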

\begin{claimproof}
    See~\cite[p.~5]{DBLP:conf/csl/MitchellS92}, where the notation $\tilde{\C}$ is used for the category $\Pred{\C}$.
\end{claimproof}

\begin{definition}
    \label{def:ccc-logical-relations}\AP
    Let $\C_1$ and $\C_2$ be two CCCs. The ""CCC of logical relations"" from~$\C_1$ to~$\C_2$ is the CCC $\Pred{\C_1 \times \C_2}$, which admits a CCC functor $\Pred{\C_1 \times \C_2} \to \C_1 \times \C_2$.
\end{definition}

\begin{remark}
    We have defined the "CCC of logical relations" in terms of the logical predicate construction $\Pred{-}$ of \Cref{def:ccc-logical-predicates}. More concretely, this construction gives a category that can be described in the following way:
    \begin{itemize}
        \item its objects are triples $(c_1, c_2, \Vdash)$ where $c_i$ is an object of $\C_i$ for $i = 1, 2$ and $\Vdash$ is a subset of $\C_1(1, c_1) \times \C_2(1, c_2)$, and is thus a relation between the points of $c_1$ and of $c_2$,
        \item its morphisms from $(c_1, c_2, \Vdash)$ to $(c_1', c_2', \Vdash')$ are pairs $(f_1, f_2) \in \C_1(c_1, c_1') \times \C_2(c_2, c_2')$ such that for every pair $(x_1, x_2) \in \C_1(1, c_1) \times \C_2(1, c_2)$,
        \[
            \text{if}\ 
            x_1 \Vdash x_2
            \ ,
            \quad
            \text{then}\ 
            f_1 \circ x_1 \Vdash' f_2 \circ x_2
            \ .
        \]
    \end{itemize}
    For the proof that this category is a CCC, see~\cite[Proposition~4.3]{DBLP:conf/csl/MitchellS92}.
\end{remark}

\begin{remark}
    The "CCC of logical relations" from $\C_1$ to $\C_2$ comes with two projections to~$\C_1$ and to~$\C_2$ which are CCC functors. By \Cref{rmk:CCC-functor-lambda-terms}, we get that for any relation $(c_1, c_2, \Vdash)$,
\[
    \semty{A}{(c_1, c_2, \Vdash)}
    \quad=\quad
    (\semty{A}{c_1}, \semty{A}{c_2}, \Vdash^A)
    \quad\text{for some $\Vdash^A\ \subseteq\ \C_1(1, \semty{A}{c_1}) \times \C_2(1, \semty{A}{c_2})$.}
\]
The interpretation of a $\lambda$-term $t \in \Tm{A}$ at an object $(c_1, c_2, \Vdash)$ is a morphism of the form
    \[
        (\semtm{t}{c_1}, \semtm{t}{c_2})
        \quad:\quad
        1
        \ \longto\ 
        \semty{A}{(c_1, c_2, \Vdash)}
        \qquad
        \text{which means that}
        \qquad
        (\semtm{t}{c_1}, \semtm{t}{c_2}) \in \Vdash^A
        \ .
    \]
    This is the ""fundamental lemma of logical relations"", see e.g. \cite[Lemma~4.5.3]{DBLP:books/daglib/0093287-amadio-curien}.
\end{remark}

\begin{remark}
    At this stage, the categories $\C_1$ and $\C_2$ play a symmetric role. However, this will not always be the case in the rest of the paper, and we therefore say "CCC of logical relations" from $\C_1$ to $\C_2$ to emphasize the order.
\end{remark}

\paragraph*{The squeezing construction}

We describe here a construction, which we call squeezing, which produces a CCC $\Sqz{\C}$ from a CCC $\C$ equipped with an additional structure that we call a "squeezing structure". Intuitively, the objects of $\Sqz{\C}$ are objects of $\C$ coming with bounds induced by the structure, inspired by the squeeze theorem of calculus. This construction can be seen as the proof-irrelevant counterpart to the twisted gluing construction described in \cite[Definition~5]{DBLP:conf/ctcs/AltenkirchHS95}. The notion of "squeezing structure" that is used is related to the hypotheses of \cite[Lemma~6]{DBLP:journals/mscs/Fiore22}; this pattern also occurs in the older proof theory tradition, see for instance \cite[\S8.A]{girard:hal-01322183}.

Throughout this paragraph, we fix any CCC $\C$. We recall that a wide subcategory is a subcategory containing all objects, and can hence be seen as a predicate on morphisms, closed under finite compositions.

\begin{definition}
    \label{def:squeezing-structure}\AP
    A ""squeezing structure"" on $\C$ is the data of
    \begin{itemize}
        \item two wide subcategories~$\Cleft$ and~$\Cright$ of~$\C$ with associated notations~$\toleft$ and~$\toright$ for morphisms, which are stable under finite cartesian products and such that for all~$u : c_l \toleft c_l'$ and~$v : c_r \toright c_r'$,
        \[
            v \To u
            \ :\ 
            c_r' \To c_l \longtoleft c_r \To c_l'
            \qquand
            u \To v
            \ :\ 
            c_l' \To c_r \longtoright c_l \To c_r'
            \ .
        \]
        \item for every object~$c$ of~$\C$, two objects~$L_c$ and~$R_c$ of~$\C$ such that there exists morphisms:
    \begin{equation}
        \label{eq:left-right-morphisms}
    \begin{matrix}
        L_1 \longtoleft 1
        \quad&\quad
        L_{c \times c'} \longtoleft L_c \times L_{c'}
        \quad&\quad
        L_{c \tto c'} \longtoleft R_c \tto L_{c'}
        \\
        1 \longtoright R_1
        \quad&\quad
        R_c \times R_{c'} \longtoright R_{c \times c'}    
        \quad&\quad
        L_c \tto R_{c'} \longtoright R_{c \tto c'}
        \ .
    \end{matrix}
    \end{equation}
    \end{itemize}
\end{definition}

\begin{remark}
    As we work in a proof-irrelevant setting, we are merely interested in the existence of these morphisms. Nonetheless, knowing that they belong to~$\Cleft$ or~$\Cright$ gets us back some precious information, as we will see in \Cref{lem:sqz-partial-surj} and \Cref{sec:fin-encoding}.
\end{remark}

\begin{definition}
    \label{def:sqz-cat}\AP
    For a "squeezing structure" on $\C$, the category~$\intro*\Sqz{\C}$ is the full subcategory of~$\C$ whose objects are the objects~$c$ of~$\C$ for which there exist both a left morphism $L_c \toleft c$ and a right morphism $c \toright R_c$.
\end{definition}

Notice that we write $\Sqz{\C}$ even though this construction depends both on the CCC $\C$ and on a "squeezing structure" on $\C$.

\begin{theorem}
    \label{thm:squeezing-ccc}
    For a "squeezing structure" on $\C$, the category $\Sqz{\C}$ is a sub-CCC of~$\C$.
\end{theorem}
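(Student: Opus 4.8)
The plan is to leverage the fact that $\Sqz{\C}$ is \emph{full} in $\C$: every morphism of $\C$ whose source and target both lie in $\Sqz{\C}$ is automatically a morphism of $\Sqz{\C}$. Hence, as soon as I know that the chosen terminal object, binary products and exponentials of $\C$ all have their underlying objects in $\Sqz{\C}$, the structural maps witnessing the CCC structure (the unique map to $1$, the projections and pairings, the evaluation and currying maps) are morphisms of $\Sqz{\C}$ for free, and the universal properties, being expressed entirely by equations between such morphisms, descend from $\C$. The inclusion then becomes a strict CCC functor. So the theorem reduces to three closure statements at the level of objects: $1 \in \Sqz{\C}$, and $c \times c', c \To c' \in \Sqz{\C}$ whenever $c, c' \in \Sqz{\C}$. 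For each of these I must exhibit a left morphism in $\Cleft$ and a right morphism in $\Cright$ of the shape demanded by \Cref{def:sqz-cat}.

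The terminal object is immediate: the two bounds $L_1 \toleft 1$ and $1 \toright R_1$ required by \Cref{def:sqz-cat} are exactly the nullary clauses of \eqref{eq:left-right-morphisms}. For a product $c \times c'$, I start from the bounds $L_c \toleft c$, $L_{c'} \toleft c'$ and $c \toright R_c$, $c' \toright R_{c'}$ witnessing $c, c' \in \Sqz{\C}$. Since $\Cleft$ is stable under cartesian products, their pairing gives a left morphism $L_c \times L_{c'} \toleft c \times c'$, which I compose with the product clause $L_{c \times c'} \toleft L_c \times L_{c'}$ of \eqref{eq:left-right-morphisms}; closure of the subcategory $\Cleft$ under composition yields $L_{c \times c'} \toleft c \times c'$. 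The right bound $c \times c' \toright R_{c \times c'}$ is built symmetrically, from the product of the right bounds and the clause $R_c \times R_{c'} \toright R_{c \times c'}$.

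The exponential $c \To c'$ is the delicate case, and is where I expect the real work — and the only place where the somewhat asymmetric shape of \Cref{def:squeezing-structure} is forced. The point is that the internal hom is contravariant in its first argument, which is exactly why the squeezing clauses cross $L$ and $R$ (namely $L_{c \To c'} \toleft R_c \To L_{c'}$ and $L_c \To R_{c'} \toright R_{c \To c'}$) and why the closure conditions on $\Cleft$ and $\Cright$ mix the two sides. For the left bound I feed the right bound $c \toright R_c$ of $c$ into the contravariant slot and the left bound $L_{c'} \toleft c'$ of $c'$ into the covariant slot; the closure condition $v \To u \in \Cleft$ of \Cref{def:squeezing-structure} then produces a left morphism $R_c \To L_{c'} \toleft c \To c'$, which I compose with the clause $L_{c \To c'} \toleft R_c \To L_{c'}$ to obtain $L_{c \To c'} \toleft c \To c'$. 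Dually, feeding the left bound $L_c \toleft c$ and the right bound $c' \toright R_{c'}$ into the condition $u \To v \in \Cright$ gives a right morphism $c \To c' \toright L_c \To R_{c'}$, which composed with $L_c \To R_{c'} \toright R_{c \To c'}$ yields $c \To c' \toright R_{c \To c'}$. The main care throughout is purely in the bookkeeping of variances and of which side each bound must sit on; once the endpoints are matched correctly, every step is just a composition inside $\Cleft$ or $\Cright$, whose closure under composition and under finite products is all that is used.
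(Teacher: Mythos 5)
Your proposal is correct and takes essentially the same approach as the paper: it likewise reduces the theorem to closure of $\Sqz{\C}$ under the terminal object, binary products and exponentials (the paper's \Cref{claim:full-sub-ccc} is exactly your fullness reduction), and then builds the required bounds by composing the clauses of \Cref{eq:left-right-morphisms} with products of the given bounds, respectively with the mixed-variance morphisms $v \tto u$ and $u \tto v$. In particular, your exponential case reproduces the paper's chains $L_{c \tto c'} \longtoleft R_c \tto L_{c'} \longtoleft c \tto c'$ and $c \tto c' \longtoright L_c \tto R_{c'} \longtoright R_{c \tto c'}$ exactly.
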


The proof is in \Cref{paragraph:proof-squeezing-ccc}.

\paragraph*{Partial surjections}

Logical relations which are partial surjections, i.e.\ both functional and surjective relations, can be a useful tool to obtain partial equivalence relations and to prove semantic results, see~\cite[\S3]{Bucciarelli96},~\cite[\S1.4.2]{DBLP:journals/tcs/Ehrhard12} and~\cite[Theorem~A]{entics:12280}. We now show that the squeezing construction can be applied to get partial surjections for free.

\begin{lemma}
    \label{lem:sqz-partial-surj}
    Let $\C_1$ and $\C_2$ be two CCCs and $\R$ be the "CCC of logical relations" from~$\C_1$ to~$\C_2$, whose objects are triples containing relations. Suppose that we are given a "squeezing structure" on $\R$ such that
    \begin{itemize}
        \item the relations in the objects $L_c$ are surjective,
        \item the relations in the objects $R_c$ are functional,
        \item the morphisms $(u_1, u_2)$ in $\Rleft$ are such that $\C_2(1, u_2)$ is a surjective function,
        \item the morphisms $(v_1, v_2)$ in $\Rright$ are such that $\C_2(1, v_2)$ is an injective function,
    \end{itemize}
    where, for $u \in \C(a, b)$, the function $\C(1, u) : \C(1, a) \to \C(1, b)$ is the composition $u \circ (-)$.

    Then, the relation of any object belonging to $\Sqz{\R}$ is a partial surjection.
\end{lemma}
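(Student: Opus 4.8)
The plan is to unfold the definition of partial surjection and establish its two halves separately, each time transporting the relevant property across one of the two squeezing morphisms witnessing membership in $\Sqz{\R}$. Fix an object $c = (c_1, c_2, \Vdash)$ of $\Sqz{\R}$. By \Cref{def:sqz-cat} there is a left morphism $(u_1, u_2) : L_c \to c$ lying in $\Rleft$ and a right morphism $(v_1, v_2) : c \to R_c$ lying in $\Rright$; write $\Vdash_L$ and $\Vdash_R$ for the relations carried by $L_c$ and $R_c$. I will repeatedly use the defining property of morphisms in the "CCC of logical relations", namely that $x_1 \Vdash x_2$ implies $f_1 \circ x_1 \Vdash' f_2 \circ x_2$. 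What must be shown is that $\Vdash$ is \emph{functional}, i.e.\ each point of $c_1$ is related to at most one point of $c_2$, and \emph{surjective}, i.e.\ each point of $c_2$ is related to at least one point of $c_1$.

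For surjectivity I would use the left morphism. Given a point $x_2 \in \C_2(1, c_2)$, surjectivity of the function $\C_2(1, u_2)$ provides a point $y_2$ of the $\C_2$-component of $L_c$ with $u_2 \circ y_2 = x_2$. Since the relation $\Vdash_L$ is surjective, there is a point $y_1$ with $y_1 \Vdash_L y_2$. Applying the morphism $(u_1, u_2)$ then gives $u_1 \circ y_1 \Vdash u_2 \circ y_2 = x_2$, exhibiting a partner for $x_2$ and so establishing surjectivity of $\Vdash$.

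For functionality I would use the right morphism. Suppose $x_1 \Vdash x_2$ and $x_1 \Vdash x_2'$ for a single point $x_1 \in \C_1(1, c_1)$. Pushing both facts forward along $(v_1, v_2)$ yields $v_1 \circ x_1 \Vdash_R v_2 \circ x_2$ and $v_1 \circ x_1 \Vdash_R v_2 \circ x_2'$. As $\Vdash_R$ is functional and the first component $v_1 \circ x_1$ is the same in both, we obtain $v_2 \circ x_2 = v_2 \circ x_2'$. Injectivity of the function $\C_2(1, v_2)$ then forces $x_2 = x_2'$, which is functionality of $\Vdash$.

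Taken together, these two halves show that $\Vdash$ is a partial surjection. There is no genuine obstacle here: each step is a direct diagram chase, and the only point requiring care is to pair the hypotheses with the correct squeezing morphism. Surjectivity of $\Vdash$ must be read off from the lower bound $L_c$, whose relation is surjective and whose witnessing morphism \emph{into} $c$ has a surjective $\C_2$-component; functionality of $\Vdash$ must be read off from the upper bound $R_c$, whose relation is functional and whose witnessing morphism \emph{out of} $c$ has an injective $\C_2$-component. Once the directions of these two morphisms are fixed, both transports are immediate from the fundamental property of morphisms in the "CCC of logical relations".
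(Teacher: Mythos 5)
Your proof is correct and is essentially the paper's own argument: functionality of $\Vdash$ is obtained by pushing forward along the right morphism $(v_1,v_2)$ and combining functionality of the relation of $R_c$ with injectivity of $\C_2(1,v_2)$, while surjectivity is obtained by lifting a point along the surjection $\C_2(1,u_2)$, invoking surjectivity of the relation of $L_c$, and pushing forward along $(u_1,u_2)$. As a minor bonus, your write-up avoids a slip in the paper's proof, which at one point invokes surjectivity of $\C_2(1,v_2)$ where surjectivity of $\C_2(1,u_2)$ is what is actually used.
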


The proof is in \Cref{appendix:proof-sqz-partial-surj}. We end this section by giving a definition which will appear in "squeezing structures" in \Cref{prop:squeezing-lam-finord} and in the proof of \Cref{prop:ccc-partial-surj}.

\begin{definition}
    \label{def:right-identity}\AP
    Let $\R$ be the "CCC of logical relations" from $\C_1$ to $\C_2$. We say that a morphism $(f_1, f_2) : (c_1, c_2, \Vdash) \to (c_1', c_2', \Vdash')$ is a ""target-identity"" if~$c_2$ and~$c_2'$ are the same object and if~$f_2$ is the identity morphism.
\end{definition}

\begin{remark}
    \label{rmk:right-identity-stability}
    We remark that the "target-identities" form a wide subcategory and are stable under products and exponentiation.
\end{remark}

\section{From well-pointed locally finite CCCs to finite sets}
\label{sec:partial-surjections}

We now recall the definition of the class of CCCs $\C$ for which we will show
that \salvatiReg{\C} languages coincide with our other definitions of regular languages of $\lambda$-terms.
Recall that in a CCC $\C$, a \emph{point} of an object~$c$ is a morphism~$1 \to c$ from the terminal object to~$c$.

\begin{definition}
  A~CCC is said to be:
  \begin{itemize}
  \item ""well-pointed"" if every morphism is determined by its action on points of its domain;
  \item ""locally finite"" if all its hom-sets are finite sets.
  \end{itemize}
\end{definition}

We start by introducing the following constructions, which will help us to use partial surjections.

\begin{definition}
    \label{def:ge-le-1}\AP
    Let $\C$ be a category with a terminal object. We define the following full subcategories of $\C$:
    \begin{itemize}
        \item $\Cge$ containing the objects $c$ that have at least one point; we call these objects ""inhabited"",
        \item $\Cle$ containing the objects $c$ that have at most one point,
        \item $\Ceq$ containing the objects $c$ that have exactly one point.
    \end{itemize}
\end{definition}

When instantiated to the CCC $\Lam$ of simple types and $\lambda$-terms, the notion of "inhabited object" coincides with the usual notion of inhabited simple type.

\begin{proposition}
    \label{prop:ge-ccc}
    If $\C$ is a CCC, then the category $\Cge$ is a sub-CCC of $\C$.
\end{proposition}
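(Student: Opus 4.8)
The plan is to show that $\Cge$ inherits the cartesian closed structure of $\C$, by checking that the terminal object, binary products and exponentials computed in $\C$ out of inhabited objects are again inhabited. Since $\Cge$ is by definition a \emph{full} subcategory of $\C$, it automatically contains all identities and is closed under composition; moreover, once we know that a given product or exponential of objects of $\Cge$ happens to lie in $\Cge$, its universal property in $\C$ restricts verbatim to $\Cge$, because fullness guarantees that every mediating morphism demanded by that universal property is already a morphism of $\Cge$. Hence the only thing to verify is closure under the three constructions, after which the inclusion $\Cge \hookrightarrow \C$ is a strict CCC functor.

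For the terminal object, the identity $\Id_1 : 1 \to 1$ is a point of $1$, so $1$ is inhabited and belongs to $\Cge$. For binary products, given $c$ and $c'$ in $\Cge$ with points $x : 1 \to c$ and $x' : 1 \to c'$, the pairing $\langle x, x'\rangle : 1 \to c \times c'$ is a point of $c \times c'$; hence $c \times c'$ is inhabited and lies in $\Cge$.

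The exponential is the one case requiring a small observation. Given $c$ and $c'$ in $\Cge$, I want to exhibit a point of $c \To c'$. In any CCC, points $1 \to (c \To c')$ are in natural bijection with morphisms $c \to c'$, via currying together with the canonical isomorphism $1 \times c \cong c$; it therefore suffices to produce a single morphism $c \to c'$. This is exactly where inhabitedness of the \emph{codomain} is used: choosing a point $y : 1 \to c'$ of $c'$ and composing it with the unique morphism $c \to 1$ yields a morphism $c \to c'$, whose name is the desired point of $c \To c'$. Thus $c \To c'$ is inhabited and lies in $\Cge$.

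I do not expect any genuine obstacle in this proof. The only mild subtlety is that inhabitedness of $c \To c'$ depends solely on $c'$ being inhabited, which is precisely what the constant morphism through $1$ exploits; the hypothesis that $c$ is inhabited is not needed for this step, even though both objects do lie in $\Cge$ in the situation at hand.
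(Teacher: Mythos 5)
Your proof is correct and takes essentially the same approach as the paper's: the terminal and product cases via pairing of points, and the exponential case by currying the constant morphism obtained by composing $!_c : c \to 1$ with a point of $c'$. Your closing observation that only inhabitedness of the codomain is needed is also made in the paper, where it is phrased as $\Cge$ being an exponential ideal.
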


The proof is in \Cref{paragraph:proof-ge-ccc}.

\begin{proposition}
    \label{prop:le-eq-ccc}
    If $\E$ is a "well-pointed CCC", then the category $\Ele$ is a sub-CCC of $\E$. Moreover, the category $\Eeq$ is equivalent to the terminal category.
\end{proposition}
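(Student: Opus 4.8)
The plan is to prove the two assertions separately, with well-pointedness entering crucially only at the exponential. Since $\Ele$ is by definition a full subcategory of $\E$, to see that it is a sub-CCC it suffices to check that it contains the terminal object and is closed under the binary product and the exponential of $\E$; it then inherits the strict CCC structure by restriction. The terminal object $1$ has exactly one point, namely the identity, so $1 \in \Ele$. For products, I would use that a point of $c \times c'$ is, by the universal property, the same datum as a pair consisting of a point of $c$ and a point of $c'$; hence $\E(1, c \times c')$ is in bijection with $\E(1,c) \times \E(1,c')$, so if each factor has at most one element, so does $c \times c'$.

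The only step that genuinely uses the hypothesis is the exponential. A point of $c \tto c'$ corresponds, via the CCC adjunction together with $1 \times c \cong c$, to a morphism $c \to c'$, so I must bound the number of such morphisms. Here I would argue that if $c'$ has at most one point, then any two morphisms $f, g : c \to c'$ satisfy $f \circ x = g \circ x$ for every point $x : 1 \to c$ of $c$, simply because $f \circ x$ and $g \circ x$ are both points of $c'$ and $c'$ has at most one point. By well-pointedness this forces $f = g$, so $\E(c, c')$, and hence the set of points of $c \tto c'$, has at most one element. This shows $c \tto c' \in \Ele$ and completes the first claim.

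For the second claim, I would show that $\Eeq$ is nonempty and that between any two of its objects there is exactly one morphism; this makes every object terminal, hence all objects isomorphic and the category equivalent to the terminal category $\mathbf{1}$. Nonemptiness is immediate, since $1 \in \Eeq$. Given $c, c' \in \Eeq$, uniqueness of a morphism $c \to c'$ is exactly the argument above: as $c'$ has at most one point, well-pointedness forces any two morphisms $c \to c'$ to coincide. Existence is obtained by composing the unique map $c \to 1$ with a point $1 \to c'$, which exists because $c'$ has at least one point. Thus $\E(c, c')$ is a singleton for all $c, c' \in \Eeq$, so every object is terminal and $\Eeq \simeq \mathbf{1}$.

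The main obstacle --- indeed the only non-routine point --- is the exponential closure in the first claim, where well-pointedness is indispensable: without it, an object $c'$ with a single point could still be the target of several distinct morphisms from $c$, producing several points of $c \tto c'$ and breaking closure. Everything else reduces to direct computations with the universal properties of the terminal object, products, and exponentials.
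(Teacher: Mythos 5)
Your proof is correct and follows essentially the same route as the paper's: closure of $\Ele$ under the terminal object, products, and exponentials (with well-pointedness invoked exactly at the exponential step), followed by showing that all hom-sets of $\Eeq$ are singletons. The only cosmetic difference is that the paper bounds the points of $e \tto e'$ by injecting $\E(1, e \tto e')$ into the function set $\E(1, e) \tto \E(1, e')$ and gets the unique morphism in $\Eeq$ from the fact that $e \tto e'$ lies in both $\Ege$ and $\Ele$, whereas you identify points of $c \tto c'$ with morphisms $c \to c'$ and construct existence via $c \to 1 \to c'$; these are interchangeable.
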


The proof is in \Cref{paragraph:proof-le-eq-ccc}. We now prove the interesting fact that "inhabited objects" characterize the recognized languages of $\lambda$-terms in a "well-pointed CCC".

\begin{proposition}
    \label{prop:well-pointed-ge-salvati}
    If $\E$ is a "well-pointed CCC", then a language of $\lambda$-terms is \salvatiReg{\E} if and only if it is \salvatiReg{\Ege}.
\end{proposition}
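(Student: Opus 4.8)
The plan is to prove the two implications separately; the backward direction is essentially immediate, while the forward direction requires isolating a short lemma about points, proved by induction on simple types. For the \enquote{if} direction, I would use that $\Ege$ is a sub-CCC of $\E$ by \Cref{prop:ge-ccc}, so the inclusion $\Ege \hookrightarrow \E$ is a CCC functor; being a full subcategory, it is faithful. \Cref{rmk:CCC-functor-lambda-terms} then applies directly: recognition by an object of $\Ege$ and by its image in $\E$ coincide, so every language that is \salvatiReg{\Ege} is \salvatiReg{\E}. This settles the easy direction in one line.

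For the \enquote{only if} direction, suppose $L \subseteq \Tm{A}$ is recognized by an object $c$ of $\E$, say $L = \Lang{F}$ with $F \subseteq \E(1, \semty{A}{c})$. If $c$ is itself \emph{inhabited}, then $c$ lies in $\Ege$, and since the inclusion is a strict CCC functor, the object $\semty{A}{c}$ (which is again inhabited), its set of points $\E(1, \semty{A}{c})$, and all the semantic brackets $\semtm{t}{c}$ are computed identically inside $\Ege$; hence $L \in \Rec[c]{A}$ already as a language recognized in $\Ege$, and $L$ is \salvatiReg{\Ege}. The whole difficulty is thus concentrated in the case where $c$ has no point.

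The key lemma I would isolate is: if $c$ has no point, then for every simple type $A$ the object $\semty{A}{c}$ has at most one point. I would prove this by induction on $A$, the crucial ingredient being \emph{well-pointedness}: whenever an object $Y$ has at most one point, any two morphisms $f, g \colon X \to Y$ agree after precomposition with every point of $X$ (each composite is forced to be the unique point of $Y$, or there are no such composites), so $f = g$ and $\E(X, Y)$ has at most one element. The base case $\tyo$ is just $c$, which has no point; the type $1$ and products are immediate, since a point of a product is a pair of points; and for $\semty{A_1 \To A_2}{c}$, whose points are the morphisms $\semty{A_1}{c} \to \semty{A_2}{c}$, applying the observation above with $Y = \semty{A_2}{c}$ (which has at most one point by the induction hypothesis) bounds the number of such morphisms by one. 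Granting this lemma, the semantic bracket $\semtm{-}{c} \colon \Tm{A} \to \E(1, \semty{A}{c})$ is constant (vacuously so if $\Tm{A}$ is empty), so $L = \Lang{F}$ is either $\emptyset$ or all of $\Tm{A}$; both are recognized by the terminal object $1$, which is inhabited and hence lies in $\Ege$, by taking $F = \emptyset$ or $F$ equal to the full set of points. This finishes the forward direction.

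I expect the main obstacle to be precisely this inductive lemma, and within it the exponential step. The natural but doomed temptation is to relate $\semty{A}{c}$ back to $c$ through a morphism or functor — this fails, as $\tyo$ is not a retract of, for instance, $\tyo \To \tyo$, so interpreting at $c \To c$ is in general strictly coarser than interpreting at $c$ rather than finer. The resolution is to avoid any such comparison and to extract everything from the single categorical fact that, in a \emph{well-pointed} CCC, morphisms into an object with at most one point are unique. Once that is in place, the bookkeeping between the inhabited and non-inhabited cases, and the trivial recognition of $\emptyset$ and $\Tm{A}$, is routine.
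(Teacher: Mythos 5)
Your proof is correct and takes essentially the same route as the paper: the easy direction via \Cref{prop:ge-ccc} and \Cref{rmk:CCC-functor-lambda-terms}, and the hard direction by observing that a recognizing object with no points can only yield the empty and full languages, which are then recognized by the terminal object of $\Ege$. Your inductive lemma (if $c$ has no points then $\semty{A}{c}$ has at most one point, with well-pointedness handling the exponential step) is precisely the content the paper packages as \Cref{prop:le-eq-ccc}, namely that the objects with at most one point form a sub-CCC of a well-pointed CCC.
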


\begin{proof}
  Let $\E$ be a "well-pointed CCC". By \Cref{prop:ge-ccc}, $\Ege$ is a sub-CCC of $\E$ so any language which is \salvatiReg{\Ege} is \salvatiReg{\E}, as explained in \Cref{rmk:CCC-functor-lambda-terms}.

    We now show that the only language recognized by $\Ele$ is the empty and full languages. Let $A$ be a simple type and $c$ be an object of $\Ele$. As $\E$ is well-pointed and by \Cref{prop:le-eq-ccc}, we know that $\semty{A}{c}$ belongs to $\Ele$, which means that $\C(1, \semty{A}{c})$ is empty or a singleton, so~$\Rec[c]{A}$ contains at most the empty and full languages, which are the same when $A$ is not inhabited.

    The empty and full languages are recognized by any CCC, so in particular by $\Ege$. Therefore, all \salvatiReg{\E} languages are \salvatiReg{\Ege}.
\end{proof}

\begin{example}
    \begin{itemize}
        \item The category $\catge\FinSet$ is the CCC of \emph{non-empty} finite sets.
        \item An implicative semilattice is a meet-semilattice such that each meet operation has an upper adjoint. Implicative semilattices are CCCs, however, they are "degenerate" and so never distinguish different $\lambda$-terms of the same type.

        Another way to understand this fact is to remark that their full subcategory of "inhabited objects" is the terminal category.
    \end{itemize}
\end{example}

\noindent
Next, we introduce the basic partial relations at which we will interpret the $\lambda$-calculus.

\begin{definition}
    \label{def:well-pointed-bij}
    Let $\E$ be a "well-pointed" "locally finite" CCC and $\R$ be the "CCC of logical relations" from $\FinSet$ to $\E$. For any object $e$ of $\E$, we consider the triple
    \[
        T_e
        \quad:=\quad
        (\E(1, e), e, \sim_e)
        \qquad
        \text{where $\sim_e$ is the identity relation of $\E(1, e)$.}
    \]
    which extends to a functor $T : \E \to \R$.
\end{definition}

We now prove a converse to \Cref{lem:sqz-partial-surj} in the present case.

\begin{proposition}
    \label{prop:ccc-partial-surj}
    Let $\E$ be a "well-pointed" "locally finite" CCC whose objects are all "inhabited" and $\R$ be the "CCC of logical relations" from $\FinSet$ to $\E$. Then, the full subcategory of partial surjections is a sub-CCC of $\R$.
\end{proposition}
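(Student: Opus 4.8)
The plan is to verify directly that the terminal object of $\R$ is a partial surjection and that the full subcategory of partial surjections is closed under the binary products and exponentials computed in $\R$; since a full subcategory automatically contains the identities and is closed under composition, this is exactly what it takes to make it a sub-CCC. Throughout, I write an object of $\R$ as a triple $(Q, e, \Vdash)$ with $Q$ a finite set, $e$ an object of $\E$ and $\Vdash \subseteq \FinSet(1,Q) \times \E(1,e)$, identifying $\FinSet(1,Q)$ with $Q$. For points of an exponential I use currying $\E(1, e \tto e') \cong \E(e,e')$ and write $\psi \cdot x := \ev_{e,e'} \circ \langle \psi, x\rangle \in \E(1,e')$ for the application of $\psi$ to a point $x$ of $e$. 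Recall that $\Vdash$ is a partial surjection when it is functional (each $q$ relates to at most one point of $e$) and surjective (each point of $e$ relates to at least one $q$). I first record that, because every object of $\E$ is inhabited, surjectivity forces $Q$ to be non-empty for every partial surjection object: since $\E(1,e) \neq \emptyset$, some point of $e$ has a preimage in $Q$.

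The terminal object $(1, 1, \Vdash)$ of $\R$ has both $\FinSet(1,1)$ and $\E(1,1)$ singletons with $\Vdash$ the full relation, which is trivially a partial surjection. For a binary product $(Q, e, \Vdash) \times (Q', e', \Vdash') = (Q \times Q', e \times e', \Vdash'')$, the product relation is pointwise under $\E(1, e \times e') \cong \E(1,e) \times \E(1,e')$, namely $(q,q') \Vdash'' (x,x')$ iff $q \Vdash x$ and $q' \Vdash' x'$; functionality and surjectivity of $\Vdash''$ then hold coordinatewise, so products of partial surjections are again partial surjections. Both checks are routine.

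The exponential is where the content lies. The relation $\Vdash^{\tto}$ on $(Q \tto Q', e \tto e')$ relates $\phi \in \FinSet(Q, Q')$ and $\psi \in \E(e,e')$ exactly when $\phi(q) \Vdash' \psi \cdot x$ holds for all $q, x$ with $q \Vdash x$. For \emph{functionality}, suppose $\phi \Vdash^{\tto} \psi_1$ and $\phi \Vdash^{\tto} \psi_2$, and let $x$ be an arbitrary point of $e$; by surjectivity of $\Vdash$ there is $q$ with $q \Vdash x$, so $\phi(q) \Vdash' \psi_1 \cdot x$ and $\phi(q) \Vdash' \psi_2 \cdot x$, whence functionality of $\Vdash'$ gives $\psi_1 \cdot x = \psi_2 \cdot x$. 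As $x$ was arbitrary, $\psi_1$ and $\psi_2$ agree on all points of $e$, and well-pointedness of $\E$ forces $\psi_1 = \psi_2$. For \emph{surjectivity}, given $\psi \in \E(e,e')$ I build $\phi : Q \to Q'$ as follows: for each $q$ that is related by $\Vdash$ to some $x$ — which is then unique by functionality of $\Vdash$ — I pick a preimage $\phi(q)$ of the point $\psi \cdot x$ under $\Vdash'$, available by surjectivity of $\Vdash'$; on the remaining elements of $Q$ I let $\phi$ take an arbitrary value in the non-empty set $Q'$. By construction $\phi \Vdash^{\tto} \psi$, so $\Vdash^{\tto}$ is a partial surjection.

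The main obstacle is the functionality of the exponential relation: there the contravariance of $\tto$ forces us to combine surjectivity of the domain relation $\Vdash$ with functionality of the codomain relation $\Vdash'$, and it is precisely well-pointedness of $\E$ that upgrades the pointwise agreement of $\psi_1$ and $\psi_2$ into genuine equality of morphisms $e \to e'$. Inhabitedness enters only to keep each $Q'$ non-empty, which is what lets the constructed functions $\phi$ be total; local finiteness of $\E$, although a standing hypothesis of the section, is not invoked in this closure argument.
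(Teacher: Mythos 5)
Your proof is correct and is essentially the paper's own first proof of this proposition: the same direct verification via closure under terminal object, products and exponentials, with functionality of the exponential relation obtained from surjectivity of the domain relation, functionality of the codomain relation and well-pointedness of $\E$, and surjectivity obtained by extending a partial function to a total function into $Q'$, which is non-empty thanks to inhabitedness. (The paper additionally gives a second proof via its squeezing construction, but your argument matches the first, hands-on one, including the observation that local finiteness plays no role in the closure argument itself.)
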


We provide two proofs of this proposition. The first one constructs the CCC by hand, see \Cref{paragraph:first-proof-ccc-partial-surj}. The other one uses the squeezing technique, see \Cref{paragraph:second-proof-ccc-partial-surj}.

\begin{theorem}[{claimed in \cite[Lemma~20]{DBLP:books/hal/Salvati15}}]
    \label{thm:salvati-points-finset}
    For every "well-pointed" "locally finite" CCC $\E$, any \salvatiReg{\E} language is~\salvatiReg{\FinSet}.
\end{theorem}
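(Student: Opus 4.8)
The plan is to first reduce to the situation where every object of $\E$ is inhabited, and then to transport recognition across a partial surjection relating a finite set to the object of $\E$ that witnesses recognition. For the reduction I would invoke \Cref{prop:well-pointed-ge-salvati}: a language is \salvatiReg{\E} if and only if it is \salvatiReg{\Ege}. The full subcategory $\Ege$ contains the terminal object (which has exactly one point) and is a sub-CCC by \Cref{prop:ge-ccc}; since it is full and shares its terminal object with $\E$, it remains well-pointed and locally finite, and all of its objects are inhabited by construction. Hence it suffices to prove the statement for $\Ege$, so from now on I may assume that $\E$ is well-pointed, locally finite, and has all objects inhabited.

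So assume $L \subseteq \Tm{A}$ is \salvatiReg{\E}, witnessed by an object $c$ of $\E$ and a subset $F \subseteq \E(1, \semty{A}{c})$ with $L = \Lang{F}$. Let $\R$ be the "CCC of logical relations" from $\FinSet$ to $\E$, and consider the object $T_c = (\E(1,c), c, \sim_c)$ of \Cref{def:well-pointed-bij}. Its left component is the finite set $Q := \E(1,c)$, which is finite by local finiteness, and its relation $\sim_c$ is the identity bijection between the points of $Q$ and those of $c$, hence in particular a partial surjection. By \Cref{prop:ccc-partial-surj}, the full subcategory of partial surjections is a sub-CCC of $\R$, and it contains $T_c$. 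Therefore the interpretation of the $\lambda$-calculus at $T_c$ stays inside this sub-CCC, so for every type $A$ the object $\semty{A}{T_c}$ has the form $(\semty{A}{Q}, \semty{A}{c}, \Vdash^A)$ with $\Vdash^A$ a partial surjection.

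It then remains to transfer $F$ along $\Vdash^A$. I would set
\[
    F' \ :=\ \{x \in \FinSet(1, \semty{A}{Q}) \mid \exists\, y \in F,\ x \Vdash^A y\}
\]
and claim that $L = \Lang{F'} \in \Rec[Q]{A}$, which exhibits $L$ as \salvatiReg{\FinSet}. By the fundamental lemma of logical relations, $\semtm{t}{Q} \Vdash^A \semtm{t}{c}$ for every $t \in \Tm{A}$. If $t \in L$ then $\semtm{t}{c} \in F$, so $\semtm{t}{Q} \in F'$ immediately. Conversely, if $\semtm{t}{Q} \in F'$, choose $y \in F$ with $\semtm{t}{Q} \Vdash^A y$; since $\Vdash^A$ is functional, $y$ is forced to equal $\semtm{t}{c}$, whence $\semtm{t}{c} \in F$ and $t \in L$. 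This proves the claim.

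The genuinely delicate content is not in this transfer, which uses only functionality of $\Vdash^A$, but is packaged into \Cref{prop:ccc-partial-surj}. The subtle point is that a single functional relation at the base type does not by itself propagate to functional relations at all types under the logical-relations exponential; what does propagate is the stronger property of being a partial surjection, and it is precisely the interplay of functionality, surjectivity, well-pointedness and inhabitedness that keeps this class closed. This is also where the two standing hypotheses are consumed: local finiteness guarantees that $Q = \E(1,c)$ is a genuine finite set, while well-pointedness and inhabitedness — the latter secured by the reduction to $\Ege$ — are exactly what \Cref{prop:ccc-partial-surj} requires.
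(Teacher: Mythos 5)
Your proof is correct and follows essentially the same route as the paper's: reduce to $\Ege$ via \Cref{prop:well-pointed-ge-salvati}, interpret at the object $T_e$ of \Cref{def:well-pointed-bij}, invoke \Cref{prop:ccc-partial-surj} so that the relation at type $A$ is a partial surjection, and transfer the accepting subset by inverse image using the "fundamental lemma of logical relations" plus functionality. If anything, you are slightly more careful than the paper in spelling out that $\Ege$ inherits well-pointedness and local finiteness and that the partial-surjection property at all types comes from \Cref{prop:ccc-partial-surj} rather than from \Cref{rmk:CCC-functor-lambda-terms} alone.
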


\begin{proof}
    Let $A$ be a simple type and $L$ be a language of $\lambda$-terms of type $A$ which is \salvatiReg{\E}. By \Cref{prop:well-pointed-ge-salvati}, it is \salvatiReg{\Ege}. Let $e$ be an object of $\Ege$ such that $L \in \Rec[e]{A}$. We consider the object $T_e = (\E(1, e), e, \sim_e)$ from \Cref{def:well-pointed-bij}, whose relation is a partial surjection. The object $\semty{A}{T_e}$ is of the form $(\semty{A}{\E(1, e)}, \semty{A}{e}, \sim_e^A)$, where the relation~$\sim_e^A$ is a partial surjection, as explained in \Cref{rmk:CCC-functor-lambda-terms}.

    Let $F$ be a subset of $\E(1, \semty{A}{e})$ such that $L$ is $\Lang{F}$. We consider the subset $F'$ of $\semty{A}{\E(1, e)}$ defined as the inverse image
    \[
        F'
        \quad:=\quad
        \{q \in \semty{A}{\E(1, e)} \mid \exists q' \in F \text{ s.t. } q \sim_e^A q'\}
    \]
    By the "fundamental lemma of logical relations", for $\lambda$-term $t$ of type $A$, we have
    \[
        \semtm{t}{\E(1, e)} \sim_e^A \semtm{t}{e}
        \ .
    \]
    which proves that $\Lang{F} \subseteq \Lang{F'}$. Moreover, as $\sim_e^A$ is a functional relation, we get the converse inclusion. This proves that $L$ is \salvatiReg{\FinSet}.
\end{proof}

\section{From finite sets to \pdflambda-terms}
\label{sec:fin-encoding}

In this section, we apply the squeezing construction of \Cref{def:sqz-cat} on a "CCC of logical relations" to show that every \salvatiReg{\FinSet} language is \hkReg, through an encoding of finite sets into the simply typed $\lambda$-calculus. To achieve that, we need to change slightly of setting, by moving from finite sets to finite ordinals. This will make it possible to define the functor $\Fin{-}$ without ambiguity.

Therefore, we consider the category $\FinOrd$ whose objects are natural numbers and whose morphisms are the set-theoretic maps between the associated finite cardinals~$\intro*\card{n} := \{1, \dots, n\}$. The inclusion of $\FinOrd$ in $\FinSet$ is a fully faithful functor that is essentially surjective, henceforth we get an equivalence between $\FinOrd$ and $\FinSet$ using the axiom of choice. In particular, $\FinOrd$ is a CCC that recognizes the same languages as $\FinSet$.

\paragraph*{The encoding of finite sets and its squeezing structure}

We take~$\R$ to be the "CCC of logical relations" from $\Lam$ to $\FinOrd$. We recall that the objects of $\R$ are therefore triples~$R = (B, n, \Vdash)$ where~$B$ is a simple type,~$n$ is a natural number and~$\Vdash$ is a subset of the product~$\Tm{B} \times \card{n}$. A morphism of~$\R$ is a pair of morphisms of~$\Lam$ and~$\FinOrd$ which respect the relations.

\begin{definition}
    \label{def:fin}
    We define the functor $\intro*\Fin{{-}} : \FinOrd \to \Lam$ as $\Fin{n} := \tyo^n \tto \tyo$ and, for every $f : n \to n'$, the $\lambda$-term $\Fin{f} : \Fin{n} \to \Fin{n'}$ is
    \[
        \lambda (p : \Fin{n}).\,\lambda (x : \tyo^{n'}).\,p\ \langle x_{f(1)},\,\dots,\,x_{f(n)}\rangle
        \ .
    \]
\end{definition}

\begin{remark}
    As~$\FinOrd$ is equivalent to the free cocartesian category and~$\op{\Lam}$ is cocartesian, we get a functor $n \mapsto \tyo^n$. The composition of the two functors
\[
\begin{tikzcd}[ampersand replacement=\&]
	\FinOrd \&\& {\op{\Lam}} \&\& \Lam
	\arrow["{n \mapsto \tyo^n}", from=1-1, to=1-3]
	\arrow["{(-) \tto \tyo}", from=1-3, to=1-5]
\end{tikzcd}
\]
is precisely the functor $\Fin{{-}}$.
\end{remark}

Our goal is now to exhibit a "squeezing structure" on $\R$ in order to show \Cref{thm:hk-squeezing}. We consider the "target-identities" for the two wide subcategories of the structure. We now define the following family of objects.

\begin{definition}
    \label{def:bij-logrel}
    For any natural number $n$, we define the object $\Bij{n}$ as
\[
    \intro*\Bij{n}
    \quad:=\quad
    \basebij{n}
\]
where~$\logrel{n}$ is the bijection between the sets~$\Tm{\Fin{n}}$ and~$\card{n}$ defined as
\[
    \intro*\logrel{n}
    \quad
    :=
    \quad
    \{(\pi_i, i) : 1 \le i \le n\}
    \qquad\text{with~$\pi_i$ the~$\lambda$-term~$\lambda(x : \tyo^n).\,x_i$}
    \ .
\]
This assignment extends to a functor $\Bij{(-)} : \FinOrd \to \R$.
\end{definition}

\begin{proposition}
    \label{prop:squeezing-lam-finord}
    There is a "squeezing structure" on the CCC $\R$ such that:
    \begin{itemize}
        \item the left and right morphisms are the "target-identities" of~$\R$,
        \item for any object $c = (B, n, \Vdash)$ of $\R$, the objects $L_c$ and $R_c$ are both equal to $\Bij{n}$.
    \end{itemize}
\end{proposition}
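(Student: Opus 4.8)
The plan is to take both wide subcategories $\Cleft$ and $\Cright$ to be the "target-identities" of $\R$, and to set $L_c = R_c = \Bij{n}$ for every object $c = (B, n, \Vdash)$, so that $L_c$ and $R_c$ depend only on the $\FinOrd$-component $n$. With this choice, the first clause of \Cref{def:squeezing-structure} --- that $\Cleft$ and $\Cright$ are wide subcategories, stable under finite products and under the operations $v \To u$ and $u \To v$ --- is exactly the content of \Cref{rmk:right-identity-stability}, so nothing remains to verify there.

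It then remains to produce the comparison morphisms of \eqref{eq:left-right-morphisms}. First I fix the cartesian closed structure of $\FinOrd$ on the nose: the terminal object is $1$, the product of $n$ and $n'$ is $n \cdot n'$, and the exponential is $(n')^n$, together with fixed bijections $\card{n \cdot n'} \cong \card{n} \times \card{n'}$ and $\card{(n')^n} \cong \card{n'}^{\card{n}}$. Since $\Bij{k}$ has $\FinOrd$-component $k$, the objects $L_1$, $L_{c \times c'}$, $L_{c \tto c'}$ have $\FinOrd$-components $1$, $n \cdot n'$, $(n')^n$, which are exactly those of the terminal $1$, of $L_c \times L_{c'}$, and of $R_c \tto L_{c'}$; the same holds symmetrically on the right. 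Each required comparison therefore has a source and a target sharing the same $\FinOrd$-component, so it may be taken to be a "target-identity"; and since $\Cleft = \Cright$, a single target-identity simultaneously discharges the left condition of the top row and the right condition of the bottom row. Hence only three morphisms must be built, each a target-identity whose $\Lam$-component is a closed $\lambda$-term, and checking that it is a morphism of $\R$ reduces --- because every target is a bijective relation $\logrel{}$ --- to verifying that this term sends the canonical projections to the expected canonical projections.

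For the terminal comparison, the constant $\lambda$-term $1 \tto \Fin{1}$ returning the unique point $\pi_1$ of $\Fin{1}$ manifestly respects $\logrel{1}$. For the product, I take the term of type $(\Fin{n} \times \Fin{n'}) \tto \Fin{n \cdot n'}$ which, on a pair $\langle p, q\rangle$, returns $\lambda(x : \tyo^{n \cdot n'}).\,p\,\langle q\,\rho_1(x), \dots, q\,\rho_n(x)\rangle$, where $\rho_a(x)$ is the $\lambda$-definable tuple gathering the $a$-th block of coordinates of $x$; on $\langle \pi_i, \pi_j\rangle$ it evaluates to the projection $\pi_k$ whose index $k$ is the image of $(i,j)$ under the chosen bijection. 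For the exponential, I take the term of type $(\Fin{n} \tto \Fin{n'}) \tto \Fin{(n')^n}$ which, given $f$, applies $f$ to each projection $\pi_1, \dots, \pi_n$ --- yielding $\pi_{g(1)}, \dots, \pi_{g(n)}$ when $f$ is related to $g$ --- and uses these $n$ selectors to index, by iterated one-dimensional selection, into the argument $x : \tyo^{(n')^n}$ regarded through $\card{(n')^n} \cong \card{n'}^{\card{n}}$ as an $n$-dimensional array; on a related pair it returns the projection indexed by $(g(1), \dots, g(n))$.

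The main obstacle is this last, exponential comparison. Writing down the closed term that performs the iterated selection requires a finite recursion on $n$ --- peeling off the last coordinate with $f\,\pi_n$, recursing on the resulting $(n-1)$-dimensional array, and finishing with $f\,\pi_1$ --- and the genuinely delicate point is to align this term's indexing convention with the chosen bijection $\card{(n')^n} \cong \card{n'}^{\card{n}}$ (and, for the product, with $\card{n \cdot n'} \cong \card{n} \times \card{n'}$) so that related inputs land on the correct projection. Once these index conventions are fixed coherently, the verification that each term respects the relations is a direct computation on projections, which completes the construction of the squeezing structure.
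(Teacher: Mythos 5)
There is a genuine gap: you have misread the orientation of the morphisms in \Cref{eq:left-right-morphisms}, and as a result you construct only half of the required data. In the paper's conventions a left morphism $L_c \toleft c$ goes \emph{from} $L_c$ \emph{to} $c$, and a right morphism $c \toright R_c$ goes from $c$ to $R_c$; this is unambiguous from \Cref{def:sqz-cat}, from the description of $\Sqz{\R}$ given right after \Cref{prop:squeezing-lam-finord} (an object $(B, n, \Vdash)$ must admit a target-identity $\Bij{n} \to (B, n, \Vdash)$ \emph{and} a target-identity $(B, n, \Vdash) \to \Bij{n}$), and from the proof of \Cref{lem:sqz-partial-surj}. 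Consequently, even with $L_c = R_c = \Bij{n}$ and $\Cleft = \Cright$, the two rows of \Cref{eq:left-right-morphisms} do \emph{not} collapse into one: they differ by exchanging source and target, not by exchanging $L$ and $R$. Besides the three morphisms you build, namely $1 \to \Bij{1}$, $\Bij{n} \times \Bij{n'} \to \Bij{n \times n'}$ and $\Bij{n} \tto \Bij{n'} \to \Bij{n \tto n'}$ (which do match the paper's), the structure also requires $\Bij{1} \to 1$, $\Bij{n \times n'} \to \Bij{n} \times \Bij{n'}$ and $\Bij{n \tto n'} \to \Bij{n} \tto \Bij{n'}$. These are not a formality: because $\tto$ is contravariant in its first argument, the left and right data feed into each other (the required comparisons are $L_{c \tto c'} \toleft R_c \tto L_{c'}$ and $L_c \tto R_{c'} \toright R_{c \tto c'}$), so the proof of \Cref{thm:squeezing-ccc} needs both rows at every step; without the top row one cannot conclude that $\semty{A}{\Bij{n}}$ lies in $\Sqz{\R}$, which is exactly what \Cref{thm:hk-squeezing} uses.

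The repair follows the paper's own proof, and the missing pieces are manageable. The morphism $\Bij{1} \to 1$ is the unique morphism to the terminal object, which is a target-identity. The morphism $\Bij{n \times n'} \to \Bij{n} \times \Bij{n'}$ is $\langle \Bij{\pi_1}, \Bij{\pi_2} \rangle$, given by functoriality of $\Bij{(-)}$ applied to the two projections; its $\FinOrd$-component is $\langle \pi_1, \pi_2 \rangle = \Id_{n \times n'}$. The least trivial one, $\Bij{n \tto n'} \to \Bij{n} \tto \Bij{n'}$, is obtained by taking your product comparison at $\Bij{n \tto n'} \times \Bij{n} \to \Bij{(n \tto n') \times n}$, composing with $\Bij{\ev_{n,n'}}$, and currying; since currying in the CCC of logical relations is componentwise, the $\FinOrd$-component of the result is the currying of $\ev_{n,n'}$, i.e.\ the identity of $n \tto n'$, so it is again a target-identity. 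With these three additions your proposal becomes a correct proof, essentially identical to the paper's; as written, however, it does not exhibit a squeezing structure.
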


The proof is in \Cref{paragraph:proof-squeezing-lam-finord}. \Cref{prop:squeezing-lam-finord} shows that we have a sub-CCC~$\Sqz{\C}$ of the "CCC of logical relations" from~$\Lam$ to~$\FinOrd$, whose objects are tuples~$(B, n, \Vdash)$ such that there exists~$\lambda$-terms~$u : \Fin{n} \to B$ and~$v : B \to \Fin{n}$ lifting to the two following "target-identities":
\[
\begin{tikzcd}[ampersand replacement=\&]
    	{\basebij{n}} \&\& {(B, n, \Vdash)}
    	\arrow["{(u, \Id_n)}", from=1-1, to=1-3]
    \end{tikzcd}
    \quand
\begin{tikzcd}[ampersand replacement=\&]
    	{(B, n, \Vdash)} \&\& {\basebij{n}}
    	\arrow["{(v, \Id_n)}", from=1-1, to=1-3]
    \end{tikzcd}
    \ .
\]

\paragraph*{Encoding recognizability by finite sets}

We have shown in \Cref{prop:squeezing-lam-finord} that we have a "squeezing structure" on the "CCC of logical relations" $\R$ from $\Lam$ to $\FinOrd$. We now show how to use this structure, culminating in the link established in \Cref{thm:hk-squeezing} between \salvatiReg{\FinOrd} and \hkReg languages.

\begin{theorem}
    \label{thm:hk-squeezing}
    If a language is \salvatiReg{\FinSet}, then it is \hkReg.
\end{theorem}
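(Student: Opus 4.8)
The plan is to interpret $\lambda$-terms in the CCC of logical relations $\R$ from $\Lam$ to $\FinOrd$, use the squeezing structure of \Cref{prop:squeezing-lam-finord} to extract a syntactic decoder, and finally read off membership at the base type. Since $\FinOrd$ recognizes the same languages as $\FinSet$, I may fix a type $A$, a natural number $n$ and a subset $F \subseteq \semty{A}{n}$ with $L = \Lang{F}$, the language recognized at the object $n$. Write $m := \semty{A}{n}$, seen as the natural number whose points form $\card{m} = \FinOrd(1, \semty{A}{n})$, so that $F \subseteq \card{m}$. I take $B := \Fin{n}$; since casting coincides with semantic interpretation in $\Lam$, we have $t[\Fin{n}] = \semtm{t}{\Fin{n}} \in \Tm{A[\Fin{n}]}$ for every $t \in \Tm{A}$.

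The object $\Bij{n} = (\Fin{n}, n, \logrel{n})$ belongs to $\Sqz{\R}$, as the identity of $\Fin{n}$ provides both the required left and right target-identities. By \Cref{thm:squeezing-ccc}, $\Sqz{\R}$ is a sub-CCC of $\R$, so $\semty{A}{\Bij{n}}$ again belongs to $\Sqz{\R}$; projecting to $\Lam$ and to $\FinOrd$ as in \Cref{rmk:CCC-functor-lambda-terms}, this object has the form $(A[\Fin{n}], m, \Vdash_A)$ for some relation $\Vdash_A$. Its right bound $R_{\semty{A}{\Bij{n}}} = \Bij{m}$ then supplies a target-identity $(v, \Id_m) : (A[\Fin{n}], m, \Vdash_A) \to (\Fin{m}, m, \logrel{m})$. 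Unfolding the morphism condition and using that $\logrel{m}$ is the bijection sending $\pi_q$ to $q$, we obtain a closed $\lambda$-term $v : A[\Fin{n}] \to \Fin{m}$ such that $(x, q) \in \Vdash_A$ implies $v\ x \eqbe \pi_q$. Applying the fundamental lemma of logical relations at $\Bij{n}$ yields $(t[\Fin{n}], \semtm{t}{n}) \in \Vdash_A$, whence $v\ t[\Fin{n}] \eqbe \pi_{\semtm{t}{n}}$ for every $t \in \Tm{A}$.

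It remains to decide whether $\semtm{t}{n} \in F$ from the projection $\pi_{\semtm{t}{n}} : \Fin{m}$. The crucial point is that $\Fin{2} = \tyo^2 \tto \tyo$ is literally $\Bool$, with $\pi_1 = \true$ and $\pi_2 = \false$. Letting $\chi_F : m \to 2$ be the $\FinOrd$-morphism with $\chi_F(j) = 1$ when $j \in F$ and $\chi_F(j) = 2$ otherwise, the functor $\Fin{-}$ produces $\Fin{\chi_F} : \Fin{m} \to \Bool$, and a direct computation gives $\Fin{\chi_F}\ \pi_j \eqbe \true$ if and only if $j \in F$. Setting $r := \lambda(w : A[\Fin{n}]).\,\Fin{\chi_F}\ (v\ w)$ of type $A[\Fin{n}] \tto \Bool$, we get $r\ t[\Fin{n}] \eqbe \Fin{\chi_F}\ \pi_{\semtm{t}{n}} \eqbe \true$ exactly when $\semtm{t}{n} \in F$, i.e.\ when $t \in L$. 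Thus $L$ is \hkReg, witnessed by $B = \Fin{n}$ and $r$.

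The heart of the argument, and its main obstacle, is the extraction of the decoder $v$: one must know that $\semty{A}{\Bij{n}}$ really lies in $\Sqz{\R}$, which is exactly what \Cref{prop:squeezing-lam-finord} together with the sub-CCC property of \Cref{thm:squeezing-ccc} guarantee, and one must check that the target-identity condition unfolds to the clean equation $v\ x \eqbe \pi_q$. Once this is in place, the passage back to a Boolean is a short computation made possible by the coincidence $\Fin{2} = \Bool$.
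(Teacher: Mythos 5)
Your proof is correct and takes essentially the same route as the paper's: both rest on $\semty{A}{\Bij{n}}$ lying in $\Sqz{\R}$ (via \Cref{prop:squeezing-lam-finord} and \Cref{thm:squeezing-ccc}), extract the decoder from the resulting "target-identity" into $\Bij{\semty{A}{n}}$, apply the "fundamental lemma of logical relations", and post-compose with $\Fin{-}$ of the characteristic function. The only difference is cosmetic: the paper composes the target-identity with $\Bij{\chi}$ inside $\R$ before invoking the fundamental lemma, whereas you unfold the morphism condition to the equation $v\ x \eqbe \pi_q$ first and then conclude by the computation $\Fin{\chi_F}\ \pi_j \eqbe \true \iff j \in F$.
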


\begin{proof}
Let $A$ be a simple type and $L \subseteq \Tm{A}$ be any \salvatiReg{\FinSet} language. There exists a finite set $Q$ and a subset~$F \subseteq \card{\semty{A}{n}}$ such that
\[
    L
    \ =\ 
    \{t \in \Tm{A} \mid \semtm{t}{n} \in F\}
    \ .
\]
We take $n$ to be the cardinality of $Q$ and note $\chi : \semty{A}{n} \to 2$ the characteristic function associated to the subset $F$. By applying the functor~$\Bij{(-)}$, we get a morphism of relations
\[
    \Bij{\chi} := (\Fin{\chi}\,,\, \chi)
    \quad:\quad
    \basebij{\semty{A}{n}}
    \ \longto\ 
    \basebij{2}
    \ .
\]
The interpretation $\semty{A}{\Bij{n}}$ is of the form $(A[\Fin{n}], \semty{A}{n}, \logrel[A]{n})$ as explained in \Cref{rmk:CCC-functor-lambda-terms}. As it is an object of $\Sqz{\R}$, it has a "target-identity" into $\Bij{\semty{A}{n}}$. By composing this morphism with $\Bij{\chi}$, we obtain a morphism
\[
    (r\,,\, \chi)
    \quad:\quad
    (A[\Fin{n}], \semty{A}{n}, \logrel[A]{n})
    \ \longto\ 
    \basebij{2}
    \ .
\]
By the "fundamental lemma of logical relations", we get that, for every~$\lambda$-term~$t$ of type~$A$,
\[
    t[\Fin{n}]
    \logrel[A]{n}
    \semtm{t}{n}
    \qquad\text{on which we apply $(r, \chi)$ to get}\qquad
    r\ t[\Fin{n}]
    \logrel{2}
    \chi(\semtm{t}{n})
\]
which states that $r\ t[\Fin{n}] \eqbe \true$ if and only if $\chi(\semtm{t}{n})$ is $1$. This proves that~$r$ recognizes the language $\Lang{F}$ given by~$F \subseteq \card{\semty{A}{n}}$, and so that $L$ is \hkReg.
\end{proof}

\section{Regular languages}
\label{sec:regular-languages}

In this section, we want to point out a few consequences of the equivalence previously proved through \Cref{thm:semantic-evaluation}, \Cref{thm:salvati-points-finset} and \Cref{thm:hk-squeezing}. Using these theorems, the following definition of "regular languages" is well-defined.

\begin{definition}
    \label{def:regular-language}\AP
    Let $A$ be a simple type. A ""regular language"" of $\lambda$-terms of type $A$ is a subset $L \subseteq \Tm{A}$ such that one of the following equivalent propositions holds:
    \begin{itemize}
        \item $L$ is \hkReg;
        \item $L$ is \salvatiReg{\C}, for some "non-degenerate", "well-pointed" and "locally finite" CCC $\C$;
        \item $L$ is \salvatiReg{\FinSet}.
    \end{itemize}
    We denote by $\intro*\Reg{A}$ the set of "regular languages" of $\lambda$-terms of type $A$.
\end{definition}

\begin{remark}
    Note that $\FinSet$ recognizes all the "regular languages" of $\lambda$-terms. In that sense, it plays the same role as the monoid
    \[
        M
        \ :=\ 
        \{f : \N \to \N \mid \exists N \in \N, \forall n \ge N, f(n) = n\}
        \quad\text{with the composition of functions}
    \]
    which recognizes all the regular languages of finite words as all finite monoids can be embedded into $M$. Such a monoid cannot be finite; however, $M$ is a \emph{locally finite monoid}, i.e.\ all its finitely generated submonoids are finite (this is a standard notion, see e.g.~\cite[\S{}V.5]{grilletsemigroups}).

    In the case of finite words, recognizability by finite monoids and locally finite monoids are equivalent when the alphabet is finite. In the case of $\lambda$-terms however, finite CCCs are all "degenerate" whereas the "locally finite" case yields "regular languages" of $\lambda$-terms, with some additional conditions.
\end{remark}

\begin{proposition}
    \label{prop:regular-languages-product}
    The set $\Reg{A}$ of "regular languages" of $\lambda$-terms of some simple type $A$ is a Boolean algebra.
\end{proposition}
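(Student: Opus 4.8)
The plan is to verify directly that $\Reg{A}$ satisfies the closure conditions defining a Boolean algebra of subsets of $\Tm{A}$: it contains $\emptyset$ and $\Tm{A}$, and it is closed under complement and under binary intersection (closure under union then follows by De Morgan). Throughout I would work with the characterization of a regular language as a \salvatiReg{\FinSet} language from \Cref{def:regular-language}, which is the one best suited to these set-theoretic manipulations.

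The easy cases come first. The empty language is $\Lang{\emptyset}$ and the full language is $\Lang{\FinSet(1, \semty{A}{Q})}$ for any finite set $Q$, so both are regular. For complement, if $L = \Lang{F}$ with $F \subseteq \FinSet(1, \semty{A}{Q})$, then unwinding \Cref{def:reg-sem} immediately gives $\Tm{A} \setminus L = \Lang{\FinSet(1, \semty{A}{Q}) \setminus F}$, recognized by the \emph{same} object $Q$.

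The substantial step is intersection, and here the idea is to exploit the freedom to recognize in an auxiliary CCC by passing to the product CCC $\FinSet \times \FinSet$. Suppose $L_1 = \Lang{F_1}$ and $L_2 = \Lang{F_2}$ are recognized by finite sets $Q_1$ and $Q_2$. The two projections $\FinSet \times \FinSet \to \FinSet$ are CCC functors, so by \Cref{rmk:CCC-functor-lambda-terms} the object $\semty{A}{(Q_1, Q_2)}$ projects to $\semty{A}{Q_1}$ and $\semty{A}{Q_2}$, and $\semtm{t}{(Q_1, Q_2)}$ is correspondingly the pair $(\semtm{t}{Q_1}, \semtm{t}{Q_2})$. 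Hence, identifying the points of $\semty{A}{(Q_1, Q_2)}$ with $\FinSet(1, \semty{A}{Q_1}) \times \FinSet(1, \semty{A}{Q_2})$ and taking $F := F_1 \times F_2$, one gets $L_1 \cap L_2 = \Lang{F}$, so that $L_1 \cap L_2$ is recognized by $(Q_1, Q_2)$ in $\FinSet \times \FinSet$.

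The remaining task---descending from $\FinSet \times \FinSet$ back to $\FinSet$---is where I expect the only real friction, because $\FinSet \times \FinSet$ is \emph{not} well-pointed: two morphisms differing only in a component whose source is empty agree vacuously on points, so \Cref{thm:salvati-points-finset} does not apply as stated. To get around this I would first note that a language recognized by the empty set is $\emptyset$ or $\Tm{A}$, each of which is also recognized by the one-element set; we may therefore assume $Q_1$ and $Q_2$ are non-empty, making $(Q_1, Q_2)$ an inhabited object. It then suffices to work inside $\catge{(\FinSet \times \FinSet)}$, which is a sub-CCC by \Cref{prop:ge-ccc}, is locally finite, and---the point that needs checking---is well-pointed, since between inhabited objects of a product category, agreeing on all pairs of points forces equality in each component. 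Applying \Cref{thm:salvati-points-finset} to this well-pointed locally finite CCC then yields that $L_1 \cap L_2$ is \salvatiReg{\FinSet}, hence regular, which completes the verification that $\Reg{A}$ is a Boolean algebra.
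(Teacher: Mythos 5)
Your proof is correct and, at its core, follows the same route as the paper: recognize $L_1 \cap L_2$ at the object $(Q_1,Q_2)$ of the product CCC $\FinSet \times \FinSet$ using the two projection CCC functors and \Cref{rmk:CCC-functor-lambda-terms}, then descend to $\FinSet$ via \Cref{thm:salvati-points-finset}. The notable difference is in the descent step, and here your caution pays off: the paper simply asserts that $\FinSet \times \FinSet$ is well-pointed, and this assertion is in fact false as stated. A point of an object of a product category is a pair of points, so an object such as $(\emptyset,\{0,1\})$ has no points at all; its four endomorphisms $(\mathrm{id}_\emptyset, f)$ then agree vacuously on points while being distinct, contradicting well-pointedness. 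Your repair is exactly what is needed: first dispose of recognition by the empty set (whose recognizable languages are only $\emptyset$ and $\Tm{A}$, since $\semty{A}{\emptyset}$ has at most one point by induction on $A$, both of which are recognized by a singleton), and then work in $\catge{(\FinSet\times\FinSet)}$, which is a sub-CCC by \Cref{prop:ge-ccc}, is locally finite, and is genuinely well-pointed because every inhabited object of the product has points in both components; \Cref{thm:salvati-points-finset} then applies (note that it does not even require non-degeneracy). Alternatively, you could have invoked \Cref{prop:well-pointed-ge-salvati} for $\FinSet$ itself to assume $Q_1,Q_2$ non-empty from the start. The only other, purely cosmetic, difference is that you exhibit $L_1 \cap L_2 = \Lang{F_1 \times F_2}$ directly, whereas the paper observes that each $\Rec[(Q_1,Q_2)]{A}$ is a Boolean algebra and uses the inclusions $\Rec[Q_i]{A} \subseteq \Rec[(Q_1,Q_2)]{A}$; these amount to the same computation. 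In short, your proposal is not only correct but slightly more careful than the paper's own proof.
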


This fact boils down to stability by union or intersection. It is proved in \cite[Theorem~8]{DBLP:conf/wollic/Salvati09} using intersection types and in \cite[Proposition~2.5]{entics:12280} using logical relations. We provide another proof, showing that it is a direct consequence of our results.

\begin{proof}
    Using any of the three conditions of \Cref{def:regular-language}, it is clear that "regular languages" are closed under complement.

    The product CCC $\FinSet \times \FinSet$ is "non-degenerate", "well-pointed" and "locally finite". It comes with two projections which are both CCC functors $\FinSet\times\FinSet\to\FinSet$. Let $Q$ and $Q'$ be two finite sets; we consider the object $(c, c')$ of~$\FinSet \times \FinSet$. As explained in \Cref{rmk:CCC-functor-lambda-terms}, for any simple type $A$, we get that $\Rec[Q]{A} \subseteq \Rec[(Q, Q')]{A}$ and $\Rec[Q']{A} \subseteq \Rec[(Q, Q')]{A}$.

    Moreover, $\Rec[(Q, Q')]{A}$ is a Boolean algebra. This shows that the intersection of a language in $\Rec[Q]{A}$ with another in $\Rec[Q']{A}$ can be taken in $\Rec[(Q, Q')]{A}$, so it is still a "regular language". Therefore, $\Reg{A}$ is a Boolean algebra.
\end{proof}

We now point out two other consequences of the equivalence in \Cref{def:regular-language}.
\begin{itemize}
    \item As stated in the introduction, Statman's finite completeness theorem tells us that singleton languages of $\lambda$-terms, taken modulo $\beta\eta$-conversion, are "regular languages". It has multiple proofs, see~\cite{statman-dowek-1992} for proof directly in the finite standard model, \cite{DBLP:conf/wollic/Salvati09} using intersection types, \cite{DBLP:journals/iandc/KobeleS15} in the model of complete lattices and \cite{DBLP:journals/ipl/SrivathsanW12} using Böhm trees.

      These results are usually proved in one CCC. Using \Cref{thm:semantic-evaluation}, \Cref{thm:salvati-points-finset} and \Cref{thm:hk-squeezing}, we get that the singleton languages are recognized by any "non-degenerate", "well-pointed" and "locally finite" CCC and are also \hkReg.
    \item Some CCCs satisfying these three conditions are the coKleisli categories of a model of linear logic, see~\cite{mellies_2009}. In \cite{HOParity}, a notion of higher-order automaton is presented, which recognizes a language of $\lambda$-terms of a given simple type. The run-trees for these non-deterministic automata are defined using an intersection type system, which is an equivalent way of presenting the semantic interpretation~$\semtm{-}{}$ of the simply typed $\lambda$-calculus in the coKleisli category $\mathbf{ScottL}_{!}$ of the Scott model of linear logic. Using the equivalence proved in the present article, a language is recognized by a higher-order automaton, i.e. \salvatiReg{\mathbf{ScottL}_{!}}, if and only if it satisfies one of the conditions of \Cref{def:regular-language}.
\end{itemize}

\section{Conclusion and future perspectives}
\label{sec:conclusion}

In this article, we have shown that every "non-degenerate", "well-pointed" and
"locally finite" CCCs recognizes exactly Salvati's "regular languages" of
$\lambda$-terms~\cite{DBLP:conf/wollic/Salvati09}, and that those also coincide
with \hkReg languages. This is evidence for the robustness of this notion, and
therefore of the dual notion of profinite $\lambda$-term introduced in
\cite{entics:12280}.

Among the aforementioned conditions, non-degeneracy is needed to recognize
non-trivial languages, and local finiteness is clearly crucial: in the case of
finite words and trees, regularity is closely related to recognition by finitary
structures. What about well-pointedness? In other words, one question that
remains open is the following: is there a locally finite CCC that recognizes
languages of $\lambda$-terms that are not "regular", i.e.\ not recognizable by
$\FinSet$? For example, sequential algorithms famously form a locally finite CCC
which is \emph{not} well-pointed,
cf.~\cite[Chapter~14]{DBLP:books/daglib/0093287-amadio-curien}; we would like to
understand its recognition power.

As explained in \Cref{ex:church-encoding}, the "regular languages" of $\lambda$-terms of type $\Church{n}$ for some natural number $n$ are exactly the usual regular languages of the finite words associated by the "Church encoding". It is possible to encode words in other calculi, like the non-commutative affine $\lambda$-calculus in which case a syntactic approach analogous to \Cref{def:reg-syn} yields the star-free languages, see \cite{iatlc1}. Moreover, gluing techniques have been studied for other calculi, see \cite{DBLP:conf/tlca/Hasegawa99} for the linear case. One can therefore wonder whether it is possible to develop a semantic approach \textit{à la} Salvati, analogous to \Cref{def:reg-sem}, for other calculi.



\bibliography{biblio}

\appendix

\section{The regular language of affine untyped terms}
\label{appendix:untyped-affine}

The goal of this appendix is to provide a detailed explaination of \Cref{ex:affine-terms}. We first introduce a grammar for the simply typed $\lambda$-terms of type $\UntypedTerms$ through the following notion of scoped term.

\begin{definition}
    We consider untyped terms with de Bruijn indices, given by the grammar
    \[
        u, v
        \quad::=\quad
        \var{i} \text{ for $i \in \N^*$}
        \mid
        \abs{u}
        \mid
        \app{u}{v}
    \]
    where we write $\abs{-}$ for the abstraction to distinguish it from the simply typed $\lambda$-abstraction.

    For any natural number $n$ and untyped term $u$, we define the judgment $\sc{n}{u}$ by induction with the rules
\[
    \begin{prooftree}
        \infer0[$1 \le i \le n$]{\sc{n}{\var{i}}}
    \end{prooftree}
    \qquad
    \begin{prooftree}
        \hypo{\sc{n + 1}{u}}
        \infer1{\sc{n}{\abs{u}}}
    \end{prooftree}
    \qquad
    \begin{prooftree}
        \hypo{\sc{n}{u}}
        \hypo{\sc{n}{v}}
        \infer2{\sc{n}{\app{u}{v}}}
    \end{prooftree}
    \ .
\]
The judgment $\sc{n}{u}$ has at most one derivation. We call a scoped term any pair of $n$ and $u$ such that $\sc{n}{u}$ is derivable.
\end{definition}

In the rest of the appendix, we will simply say that $\sc{n}{u}$ is a scoped term whenever this judgment is derivable. We now give the encoding of scoped terms into the simply typed $\lambda$-calculus.

\begin{definition}
    \label{def:encoding-scoped}\AP
    Let us consider a fixed sequence of simply typed variables $x_k : \tyo$ for $k \in \N$. We define the context $\Gamma_n$ as $\ell : (\tyo \tto \tyo) \tto \tyo,
        a : \tyo \tto \tyo \tto \tyo,
        x_1 : \tyo,
        \dots,
        x_n : \tyo$.

    For any natural number $n$ and untyped term $u$, we consider the encoding $\intro*\stsc{n}{u}$ of a scoped term defined by induction as
    \begin{align*}
        \stsc{n}{\var{i}}
        \quad&:=\quad
        x_{n + 1 - i}
        \\
        \stsc{n}{\abs{u}}
        \quad&:=\quad
        \ell\ (\lambda(x_{n + 1} : \tyo).\ \stsc{n + 1}{u})
        \\
        \stsc{n}{\app{u}{v}}
        \quad&:=\quad
        a\ (\stsc{n}{u})\ (\stsc{n}{v})
    \end{align*}
    which is such that $\Gamma_n \vdash \stsc{n}{u} : \tyo$.
\end{definition}

Using normalization for the simply typed $\lambda$-calculus, we can claim the following fact.

\begin{claim}
    \label{claim:bij-untyped-terms}
    For every natural number $n$, the encoding of scoped terms $\sc{n}{u}$ into the open $\lambda$-terms $\stsc{n}{u}$ of type $\tyo$ in context $\Gamma_n$, taken modulo $\beta\eta$-conversion, is bijective.

    In particular, the encoding induces a bijection between the set $\Tm{\UntypedTerms}$ and the set of closed untyped terms, i.e.\ the $u$ such that $\sc{0}{u}$ is a scoped term.
\end{claim}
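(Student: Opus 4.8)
The plan is to reduce the claim to a statement about $\beta$-normal $\eta$-long forms and then prove a bijection by structural induction, generalising over $n$. The point of invoking normalisation is precisely that each $\beta\eta$-class of a well-typed term of type $\tyo$ in context $\Gamma_n$ contains a \emph{unique} $\beta$-normal $\eta$-long representative; so instead of reasoning directly modulo $\beta\eta$, I would reason about these canonical representatives.

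First I would check that, for every scoped term $\sc{n}{u}$, the encoding $\stsc{n}{u}$ is already $\beta$-normal and $\eta$-long. This is immediate by induction on the derivation of $\sc{n}{u}$: each clause of \Cref{def:encoding-scoped} produces a neutral term of base type $\tyo$, namely a variable $x_{n+1-i}$, or $a$ applied to two $\eta$-long normal arguments of type $\tyo$, or $\ell$ applied to a $\lambda$-abstraction whose body is $\eta$-long normal of type $\tyo$. Combined with uniqueness of normal forms, it then suffices to prove that $u \mapsto \stsc{n}{u}$ is a bijection from scoped terms $\sc{n}{u}$ onto the $\beta$-normal $\eta$-long terms of type $\tyo$ in context $\Gamma_n$; I would establish this for all $n$ simultaneously. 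Injectivity is then straightforward by structural induction: the three syntactic shapes of $u$ yield terms with distinct head symbols ($x_j$, $a$, and $\ell$ respectively), and on variables the assignment $\var{i} \mapsto x_{n+1-i}$ is a bijection between $\{\var{1},\dots,\var{n}\}$ and $\{x_1,\dots,x_n\}$, so distinct scoped terms give syntactically distinct normal forms, hence distinct $\beta\eta$-classes.

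Surjectivity is the crux, and I expect the main obstacle to be the analysis of $\eta$-long $\beta$-normal forms at base type. Given such a $t$ with $\Gamma_n \vdash t : \tyo$, since $t$ is $\eta$-long and of base type it is \emph{neutral}, i.e.\ a head variable from $\Gamma_n$ applied to a tuple of $\eta$-long normal arguments matching its arity; the $\eta$-long discipline is exactly what rules out partial applications and guarantees arguments are themselves normal. A case split on the head variable then gives: if the head is some $x_j$ then $t = x_j$ and I set $u = \var{n+1-j}$; if the head is $a$ then $t = a\ t_1\ t_2$ with $t_1,t_2$ normal of type $\tyo$ in $\Gamma_n$, and I apply the induction hypothesis at the same $n$ to each $t_i$; if the head is $\ell$ then $t = \ell\ (\lambda(y:\tyo).\,t')$ with $t'$ normal of type $\tyo$ in $\Gamma_n, y:\tyo$, and after renaming $y$ to $x_{n+1}$ I apply the induction hypothesis at $n+1$. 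It is precisely this last case that forces the induction to range over all $n$ rather than only $n=0$, since passing under the binder introduced by $\ell$ extends the context from $\Gamma_n$ to $\Gamma_{n+1}$; keeping the de Bruijn bookkeeping coherent with the variable indices $x_{n+1-i}$ is the delicate part. Finally, the ``in particular'' statement follows from the case $n=0$: currying identifies closed $\lambda$-terms of type $\UntypedTerms$ with terms of type $\tyo$ in the context $\Gamma_0 = \ell : (\tyo \tto \tyo)\tto\tyo,\ a : \tyo\tto\tyo\tto\tyo$, and the scoped terms $\sc{0}{u}$ are exactly the closed untyped terms.
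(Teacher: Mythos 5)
Your proof is correct and follows exactly the route the paper intends: the paper offers no detailed argument, merely the remark that the claim follows ``using normalization for the simply typed $\lambda$-calculus,'' and your proposal spells out precisely what that appeal stands for --- uniqueness of $\beta$-normal $\eta$-long representatives, the observation that each encoding $\stsc{n}{u}$ is already such a normal form, and the head-variable case analysis (with the induction generalised over all $n$ to handle the binder introduced by $\ell$) showing these normal forms are exactly the encodings. Nothing in your argument deviates from or adds a gap to what the paper's one-line justification presupposes.
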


\begin{definition}
    \label{def:occ-scoped}
    For any scoped term $\sc{n}{u}$ and $1 \le i \le n$, we define the natural number $\occ{i}(\sc{n}{u})$, the number of occurences of the $i^\text{th}$ variable, by induction as:
    \begin{align*}
        \occ{i}(\sc{n}{\var{j}})
        \quad&:=\quad
        \text{$1$ if $i = j$ otherwise $0$}
        \\
        \occ{i}(\sc{n}{\abs{u}})
        \quad&:=\quad
        \occ{i + 1}(\sc{n + 1}{u})
        \\
        \occ{i}(\sc{n}{\app{u}{v}})
        \quad&:=\quad
        \occ{i}(\sc{n}{u}) + \occ{i}(\sc{n}{v})
        \ .
    \end{align*}
    When the scoped term $\sc{n}{u}$ is clear from context, we will simply write $\occ{i}$.
\end{definition}

\begin{definition}
    \label{def:sem-scoped}\AP
For any finite set $Q$ and functions $\fabs : (Q \tto Q) \to Q$ and $\fapp : Q \to Q \tto Q$, we interpret any scoped term $\sc{n}{u}$ as its semantics
\[
    \sc{n}{u}
    \qquad\rightsquigarrow\qquad
    \intro*\semsc{n}{u}
    \quad:\quad
    Q^n \longto Q
\]
which is the set-theoretic function defined, for $q_1, \dots, q_n \in Q$, by induction as:
\begin{align*}
    \semsc{n}{\var{i}}[q_1, \dots, q_n]
    \quad&:=\quad
    q_i
    \\
    \semsc{n}{\abs{u}}[q_1, \dots, q_n]
    \quad&:=\quad
    \fabs(q \mapsto \semsc{n + 1}{u}[q, q_1, \dots, q_n])
    \\
    \semsc{n}{\app{u}{v}}[q_1, \dots, q_n]
    \quad&:=\quad
    \fapp(\semsc{n}{u}[q_1, \dots, q_n])(\semsc{n}{v}[q_1, \dots, q_n])
\ .
\end{align*}
We write the arguments of the function $\semsc{n}{u}$ between square brackets $[$ and $]$.
\end{definition}

\begin{remark}
    \label{rmk:sem-scoped-and-typed}
    The semantics of \Cref{def:sem-scoped} factor through the encoding of \Cref{def:encoding-scoped} in the simply typed $\lambda$-calculus and its semantic interpretation as for all $q_1, \dots, q_n \in Q$,
\[
    \semsc{n}{u}(q_1, \dots, q_n)
    \quad=\quad
    \semtm{\,\stsc{n}{u}\,}{Q}(\fabs)(\fapp)(q_n, \dots, q_1)
    \ .
\]
\end{remark}

We now instantiate \Cref{def:sem-scoped} with the following values of $Q$, $\fabs$ and $\fapp$:
\begin{itemize}
    \item $Q$ is the set $\{0, 1, \infty\} \times \{\bot, \top\}$, with its product monoid structure. For any $q \in Q$, we write $q_1 \in \{0, 1, \infty\}$ and $q_2 \in \{\bot, \top\}$ for its two components.
    \item $\fabs$ is the function $(Q \tto Q) \to Q$ defined as
    \[
        g
        \quad\longmapsto\quad
        (g(0, \top)_1
        \ ,\ 
        g(0, \top)_2 \,\wedge\, (g(1, \top)_1 \neq \infty))
    \]
    \item $\fapp$ is the curried monoid product of $Q$, i.e.\ the function $Q \to (Q \tto Q)$ defined as
    \[
        (n, b)
        \quad\longmapsto\quad
        (n', b')
        \mapsto
        (n + n', b \wedge b')
        \ .
    \]
\end{itemize}

\begin{proposition}[Left part of the tuple]
    \label{left-part}
    For any scoped term $\sc{n}{u}$ and any elements $k_1, \dots k_n$ of $\{0, 1, \infty\}$, we have
\[
    \semsc{n}{u}[(k_1, \top)\,,\,\dots\,,\,(k_n, \top)]_1
    \quad=\quad
    \occ{1} \cdot k_1 + \dots + \occ{n} \cdot k_n
\]
where the product $\cdot : \N \times \{0, 1, \infty\} \to \{0, 1, \infty\}$ comes from the monoid structure of $\{0, 1, \infty\}$.
\end{proposition}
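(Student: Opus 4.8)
The plan is to proceed by structural induction on the derivation of the scoped term $\sc{n}{u}$, following \Cref{def:sem-scoped} and \Cref{def:occ-scoped}. The key observation I would exploit is that the first component of the output of $\semsc{n}{u}$ is governed only by the first components of $\fapp$ and $\fabs$: the first component of $\fapp$ is the addition $(k, b)(k', b') \mapsto k + k'$ of the truncated monoid $\{0, 1, \infty\}$, while the first component of $\fabs(g)$ is merely $g(0, \top)_1$. Hence the conjunctive bookkeeping carried in the second components never affects the quantity we are computing.

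In the base case $u = \var{i}$, I would compute directly that $\semsc{n}{\var{i}}[(k_1, \top), \dots, (k_n, \top)]_1 = k_i$, matching the right-hand side since $\occ{j} = 1$ for $j = i$, $\occ{j} = 0$ otherwise, and $0 \cdot k_j = 0$. In the application case $u = \app{v}{w}$, taking the first component turns $\fapp$ into addition, so the value is $\semsc{n}{v}[\dots]_1 + \semsc{n}{w}[\dots]_1$; I would then apply the induction hypothesis to $\sc{n}{v}$ and $\sc{n}{w}$ and use that the scalar product $\cdot$ distributes over $+$ (because $m \cdot k$ is $k$ added to itself $m$ times) to recombine into $\sum_{i=1}^{n} \left( \occ{i}(\sc{n}{v}) + \occ{i}(\sc{n}{w}) \right) \cdot k_i = \sum_{i=1}^{n} \occ{i}(\sc{n}{\app{v}{w}}) \cdot k_i$.

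The abstraction case $u = \abs{v}$ is where the argument is subtlest. Here the first component of $\fabs$ extracts $g(0, \top)_1$ with $g(q) = \semsc{n+1}{v}[q, (k_1, \top), \dots, (k_n, \top)]$, so I would evaluate the body $v$ with the freshly bound de Bruijn variable set to $(0, \top)$. Since the remaining arguments still all carry $\top$, the induction hypothesis applies to $\sc{n+1}{v}$ and yields $\occ{1}(\sc{n+1}{v}) \cdot 0 + \sum_{i=1}^{n} \occ{i+1}(\sc{n+1}{v}) \cdot k_i$; the first summand vanishes since $\occ{1}(\sc{n+1}{v}) \cdot 0 = 0$, and the de Bruijn shift $\occ{i}(\sc{n}{\abs{v}}) = \occ{i+1}(\sc{n+1}{v})$ of \Cref{def:occ-scoped} turns the rest into $\sum_{i=1}^{n} \occ{i}(\sc{n}{\abs{v}}) \cdot k_i$.

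I expect the abstraction case to be the only real (though mild) obstacle: it is the single place where the specific choice of the value $0$ for the bound variable and the index shift in the occurrence counts have to be reconciled, and where one must check that the restriction to inputs with second component $\top$ is preserved when recursing. The variable and application cases fall out immediately once one records that the first projection commutes with $\fapp$ and that scalar multiplication distributes over addition in $\{0, 1, \infty\}$.
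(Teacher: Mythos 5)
Your proposal is correct and follows essentially the same route as the paper's proof: a structural induction on the scoped term in which the variable case is a direct computation, the application case uses that the first component of $\fapp$ is (truncated) addition together with distributivity of the scalar product, and the abstraction case instantiates the bound variable at $(0,\top)$ so that the induction hypothesis applies, the $\occ{1}$ contribution is killed by the factor $0$, and the de Bruijn index shift $\occ{i}(\sc{n}{\abs{u}}) = \occ{i+1}(\sc{n+1}{u})$ realigns the sum. The only cosmetic difference is that you make explicit the observations (preservation of $\top$ in the arguments, distributivity) that the paper leaves implicit.
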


\begin{proof}
    We verify this by induction on the scoped term $\sc{n}{u}$.
    \begin{align*}
        &\semsc{n}{\var{i}}[(k_1, \top)\,,\,\dots\,,\,(k_n, \top)]_1
        \\
        &\ :=\ 
        k_i
        \\
        &\ =\ 
        \occ{1} \cdot k_1 + \dots + \occ{n} \cdot k_n
        \\[.5em]
        &\semsc{n}{\abs{u}}[(k_1, \top)\,,\,\dots\,,\,(k_n, \top)]_1
        \\ 
        &\ :=\ 
        \fabs\left((k, b) \mapsto \semsc{n + 1}{u}[(k, b)\,,\,(k_1, \top)\,,\,\dots\,,\,(k_n, \top)]\right)_1
        \\
        &\ =\ 
        \semsc{n + 1}{u}[(0, \top)\,,\,(k_1, \top)\,,\,\dots\,,\,(k_n, \top)]_1
        \\
        &\ =\ 
        \occ{1} \cdot k_1 + \dots + \occ{n} \cdot k_n
        \\[.5em]
        &\semsc{n}{\app{u}{v}}[(k_1, \top)\,,\,\dots\,,\,(k_n, \top)]_1
        \\
        &\ :=\ 
        \fapp(\semsc{n}{u}[(k_1, \top)\,,\,\dots\,,\,(k_n, \top)])
        (\semsc{n}{v}[(k_1, \top)\,,\,\dots\,,\,(k_n, \top)])_1
        \\
        &\ =\ 
        \semsc{n}{u}[(k_1, \top)\,,\,\dots\,,\,(k_n, \top)]_1
        \ +\ 
        \semsc{n}{v}[(k_1, \top)\,,\,\dots\,,\,(k_n, \top)]_1
        \\
        &\ =\ 
        \occ{1} \cdot k_1 + \dots + \occ{n} \cdot k_n
    \end{align*}
    In the $\textabs$ case, the last equality is obtained by remarking that $\occ{1}(\sc{n + 1}{u})$ is multiplied by $0$ and that $\occ{i + 1}(\sc{n + 1}{u}) = \occ{i}(\sc{n}{\abs{u}})$ for $i \ge 1$.
\end{proof}

We introduce the following definition.

\begin{definition}
    \label{def:affine-bound-variables}\AP
    We define the property of a scoped term to be  ""affine in its bound variables"" by induction as follows:
    \begin{itemize}
        \item $\sc{n}{\var{i}}$ is always "affine in its bound variables",
        \item $\sc{n}{\abs{u}}$ is "affine in its bound variables" if and only if
        \[
            \occ{1}(\sc{n + 1}{u})
            \ \le\ 
            1
            \quad\text{and}\quad
            \text{$\sc{n + 1}{u}$ is affine in its bound variables}
        \]
        \item $\sc{n}{\app{u}{v}}$ is "affine in its bound variables" if and only if
        \[
            \text{$\sc{n}{u}$ and $\sc{n}{v}$ are both "affine in their bound variables".}
        \]
    \end{itemize}
    A $\lambda$-term $t \in \Tm{\UntypedTerms}$ will said to be ""affine"" when the closed untyped term $\sc{0}{u}$ bijectively associated to $t$ by \Cref{claim:bij-untyped-terms} is "affine in its bound variables".
\end{definition}

\begin{proposition}[Right part of the tuple]
    \label{right-part}
    For any scoped term $\sc{n}{u}$, we have
    \[
        \semsc{n}{u}[(0, \top)\,,\,\dots\,,\,(0, \top)]_2
        \quad=\quad
        b
    \]
    where $b$ is $\top$ if and only if the scoped term $\sc{n}{u}$ is "affine in its bound variables".
\end{proposition}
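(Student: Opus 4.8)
The plan is to proceed by induction on the scoped term $\sc{n}{u}$, in close parallel to the proof of \Cref{left-part}, but now tracking the second component of $Q$ rather than the first. In every case all arguments are fixed to $(0,\top)$, and I will show that the second component $\semsc{n}{u}[(0,\top),\dots,(0,\top)]_2$ equals $\top$ precisely when $\sc{n}{u}$ is affine in its bound variables, which is exactly the claimed equality with $b$.

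For the base case $\var{i}$, one computes $\semsc{n}{\var{i}}[(0,\top),\dots,(0,\top)]_2 = (0,\top)_2 = \top$, and since a variable is always affine in its bound variables by \Cref{def:affine-bound-variables}, the two sides agree. For the application case, the second component of $\fapp$ is a conjunction, so $\semsc{n}{\app{u}{v}}[\dots]_2 = \semsc{n}{u}[\dots]_2 \wedge \semsc{n}{v}[\dots]_2$; applying the induction hypothesis to each factor, this is $\top$ exactly when both $\sc{n}{u}$ and $\sc{n}{v}$ are affine, which is the defining condition for $\sc{n}{\app{u}{v}}$.

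The abstraction case $\abs{u}$ is where the two components of $Q$ genuinely interact, and it is the crux of the argument. Here $\semsc{n}{\abs{u}}[(0,\top),\dots]_2 = \fabs(g)_2$, where $g$ is the function $(k,c) \mapsto \semsc{n+1}{u}[(k,c),(0,\top),\dots,(0,\top)]$. By the definition of $\fabs$, this equals $g(0,\top)_2 \wedge (g(1,\top)_1 \neq \infty)$. The first conjunct is $g(0,\top)_2 = \semsc{n+1}{u}[(0,\top),\dots,(0,\top)]_2$, which the induction hypothesis identifies with the assertion that $\sc{n+1}{u}$ is affine. For the second conjunct I invoke \Cref{left-part} with $k_1 = 1$ and $k_2 = \dots = k_{n+1} = 0$, obtaining $g(1,\top)_1 = \occ{1}(\sc{n+1}{u}) \cdot 1$, namely the image of the occurrence count $\occ{1}(\sc{n+1}{u})$ under the truncation $\N \to \{0,1,\infty\}$. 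Hence $g(1,\top)_1 \neq \infty$ holds precisely when $\occ{1}(\sc{n+1}{u}) \le 1$. Conjoining the two conditions reproduces verbatim the clause of \Cref{def:affine-bound-variables} for $\sc{n}{\abs{u}}$, closing the induction.

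The one place I expect to need care is this abstraction case, specifically the bookkeeping that converts the truncated-monoid test $g(1,\top)_1 \neq \infty$ into the genuine inequality $\occ{1}(\sc{n+1}{u}) \le 1$ on the natural-number occurrence count. This is exactly the junction at which the left-hand half of the construction, \Cref{left-part}, feeds into the right-hand half, and making sure the truncation semantics of $\{0,1,\infty\}$ behave as intended — so that \enquote{$\neq \infty$} really does capture \enquote{$\le 1$} — is the single step where an error could slip in; the other two cases are essentially immediate from the definitions of $\fapp$ and $\fabs$ together with the induction hypothesis.
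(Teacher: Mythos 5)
Your proof is correct and follows essentially the same route as the paper's: induction on the scoped term, with the variable and application cases immediate from the definitions of $\fabs$ and $\fapp$, and the abstraction case combining the induction hypothesis for the first conjunct with \Cref{left-part} (instantiated at $k_1 = 1$, $k_2 = \dots = k_{n+1} = 0$) to turn the test $g(1,\top)_1 \neq \infty$ into $\occ{1}(\sc{n+1}{u}) \le 1$. The subtlety you flag about the truncation $\N \to \{0,1,\infty\}$ is handled exactly as in the paper, so there is nothing to fix.
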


\begin{proof}
    We prove this by induction on the scoped term $\sc{n}{u}$.
    \begin{itemize}
    \item For any $1 \le i \le n$, $\sc{n}{\var{i}}$ is always "affine in its bound variables", and we always have
    \[
        \semsc{n}{\var{i}}[(0, \top)\,,\,\dots\,,\,(0, \top)]_2
        \ =\ 
        \top
        \ .
    \]
    \item For any scoped term $\sc{n + 1}{u}$, we have
    \begin{align*}
        &\semsc{n}{\abs{u}}[(0, \top)\,,\,\dots\,,\,(0, \top)]_2
        \\
        &\ :=\ 
        \fabs((k, b) \mapsto \semsc{n + 1}{u}[(k, b)\,,\,(0, \top)\,,\,\dots\,,\,(0, \top)])_2
        \\
        &\ =\ 
        \semsc{n + 1}{u}[(0, \top)\,,\,(0, \top)\,,\,\dots\,,\,(0, \top)]_2
        \\
        &\qquad\wedge\quad
        \semsc{n + 1}{u}[(1, \top)\,,\,(0, \top)\,,\,\dots\,,\,(0, \top)]_1 \neq \infty
        \\
        &\ =\ 
        \semsc{n + 1}{u}[(0, \top)\,,\,(0, \top)\,,\,\dots\,,\,(0, \top)]_2
        \\
        &\qquad\wedge\quad
        \occ{1}(\sc{n + 1}{u}) \le 1
        \ .
    \end{align*}
    where the last step comes from \Cref{left-part}. By the induction hypothesis on the scoped term $\sc{n + 1}{u}$, we get that $\sc{n}{\abs{u}}$ is "affine in its bound variables" if and only if $\semsc{n}{\abs{u}}[(0, \top)\,,\,\dots\,,\,(0, \top)]_2$ is $\top$.
    \item For any scoped terms $\sc{n}{u}$ and $\sc{n}{v}$, we have
    \begin{align*}
        &\semsc{n}{\app{u}{v}}[(0, \top)\,,\,\dots\,,\,(0, \top)]_2
        \\
        &\ :=\ 
        \fapp(\semsc{n}{u}[(0, \top)\,,\,\dots\,,\,(0, \top)])
        (\semsc{n}{v}[(0, \top)\,,\,\dots\,,\,(0, \top)])_2
        \\
        &\ =\ 
        \semsc{n}{u}[(0, \top)\,,\,\dots\,,\,(0, \top)]_2
        \ \wedge\ 
        \semsc{n}{v}[(0, \top)\,,\,\dots\,,\,(0, \top)]_2
    \end{align*}
    which shows, by the induction hypotheses on $\sc{n}{u}$ and $\sc{n}{v}$, that $\sc{n}{\app{u}{v}}$ is "affine in its bound variables" if and only if $\semsc{n}{\app{u}{v}}[(0, \top)\,,\,\dots\,,\,(0, \top)]_2$ is $\top$. \qedhere
    \end{itemize}
\end{proof}

\begin{theorem}
    The language of closed affine untyped terms is regular in $\FinSet$.
\end{theorem}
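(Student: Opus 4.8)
The plan is to exhibit a single recognizing object together with an accepting subset, and then to identify the resulting language with the affine one by reusing the two computations already carried out in \Cref{left-part} and \Cref{right-part}. Concretely, I would take the finite set $Q = \{0,1,\infty\} \times \{\bot,\top\}$ with the functions $\fabs$ and $\fapp$ fixed above, and set
\[
    F \ :=\ \{s \in \semty{\UntypedTerms}{Q} \mid s(\fabs)(\fapp) = (0,\top)\}.
\]
Since the ambient CCC is $\FinSet$ and $Q$ is finite, $\semty{\UntypedTerms}{Q}$ is a finite set and $F$ is a genuine subset of $\FinSet(1,\semty{\UntypedTerms}{Q})$, so by \Cref{def:reg-sem} the language $\Lang{F}$ lies in $\Rec[Q]{\UntypedTerms}$ and is therefore \salvatiReg{\FinSet}. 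It then remains only to prove that $\Lang{F}$ is exactly the language of affine terms.

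For the identification, I would fix a term $t \in \Tm{\UntypedTerms}$ and let $\sc{0}{u}$ be the closed untyped term associated to it by the bijection of \Cref{claim:bij-untyped-terms}, so that by \Cref{def:affine-bound-variables} the term $t$ is affine precisely when $\sc{0}{u}$ is affine in its bound variables. The key step is to evaluate the point $\semtm{t}{Q}(\fabs)(\fapp) \in Q$. Since the bijection of \Cref{claim:bij-untyped-terms} realizes $t$ as $\lambda\ell.\,\lambda a.\,\stsc{0}{u}$, applying $\semtm{t}{Q}$ to $\fabs$ and $\fapp$ reinstates exactly the substitution $\ell \mapsto \fabs$, $a \mapsto \fapp$, so \Cref{rmk:sem-scoped-and-typed} specialized to $n = 0$ — where the encoding $\stsc{0}{u}$ carries no variables besides $\ell$ and $a$, and the argument lists collapse to nothing — gives $\semtm{t}{Q}(\fabs)(\fapp) = \semsc{0}{u}[\,]$. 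Now I specialize the two earlier propositions to the empty context: \Cref{left-part} yields an empty sum, so the first coordinate is $\semsc{0}{u}[\,]_1 = 0$, while \Cref{right-part} yields $\semsc{0}{u}[\,]_2 = \top$ if and only if $\sc{0}{u}$ is affine in its bound variables. Hence $\semtm{t}{Q}(\fabs)(\fapp) = (0,b)$ with $b = \top$ iff $t$ is affine.

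Combining these, $t \in \Lang{F}$ iff $\semtm{t}{Q}(\fabs)(\fapp) = (0,\top)$ iff $b = \top$ iff $t$ is affine, so $\Lang{F}$ is the language of affine terms, as required. I do not expect a genuine obstacle, since all the substantive inductive content already lives in \Cref{left-part} and \Cref{right-part}; the only point demanding care is the $n = 0$ bookkeeping — checking that the closed case of \Cref{rmk:sem-scoped-and-typed} really empties the argument lists and that the empty sum in \Cref{left-part} forces the first coordinate to be $0$, so that the acceptance condition $(0,\top)$ is equivalent to affineness alone and the first coordinate plays no filtering role.
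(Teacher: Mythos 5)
Your proposal is correct and takes essentially the same route as the paper's own proof: both reduce the statement to \Cref{claim:bij-untyped-terms}, \Cref{rmk:sem-scoped-and-typed} and \Cref{right-part}, differing only in the choice of accepting subset. The paper takes $F = \{s \in \semty{\UntypedTerms}{Q} \mid s(\fabs)(\fapp)_2 = \top\}$, constraining only the second coordinate, whereas you pin the full value to $(0,\top)$ and therefore need the additional (correct) appeal to \Cref{left-part} with an empty sum to see that the first coordinate is automatically $0$ on interpretations of closed terms.
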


\begin{proof}
    We consider the language
    \[
        L
        \ :=\ 
        \{t \in \Tm{\UntypedTerms} \mid \text{$t$ is "affine"}\}
        \ .
    \]
    By definition, $t \in \Tm{\UntypedTerms}$ is "affine" if and only if $\sc{0}{u}$ is "affine in its bound variables", where $\sc{0}{u}$ is the unique scoped term such that $t = \stsc{0}{u}$, given by \Cref{claim:bij-untyped-terms}. Moreover, by \Cref{right-part}, we have that $\sc{0}{u}$ is "affine in its bound variables" if and only if
    \[
        \semsc{0}{u}_2
        \quad=\quad
        \top
    \]
    and, by \Cref{rmk:sem-scoped-and-typed}, $\semsc{0}{u}_2 = \semtm{t}{Q}(\fabs)(\fapp)$. Therefore, if we define the subset $F$ of $\semty{\UntypedTerms}{Q}$ as
    \[
        F
        \quad:=\quad
        \{s \in \semty{\UntypedTerms}{Q} \mid s(\fabs)(\fapp)_2 = \top\}
    \]
    we get that $L = \Lang{F}$ and therefore that $L$ is \salvatiReg{\FinSet}.
\end{proof}

\section{Sub-CCC proofs}
\label{appendix-sub-CCC}

Throughout the article, we construct new CCCs as full subcategories of other CCCs which verify some stability properties. A sub-CCC is a sub-category whose inclusion functor is a CCC functor. More precisely, this construction is justified by the following claim:

\begin{claim}
    \label{claim:full-sub-ccc}
    Let $\C$ be a CCC and $\D$ be a full subcategory of $\C$. If:
    \begin{itemize}
        \item the terminal object of $\C$ belongs to $\D$,
        \item the objects of $\D$ are stable under cartesian products,
        \item the objects of $\D$ are stable under exponentiation,
    \end{itemize}
    then $\D$ is a sub-CCC of $\C$.
\end{claim}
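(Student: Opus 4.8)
The plan is to transport the cartesian closed structure of $\C$ verbatim onto $\D$ and then to verify that it satisfies the required universal properties there. Concretely, I would take the terminal object of $\D$ to be the terminal object $1$ of $\C$, and for objects $c, c'$ of $\D$ I would take their product and exponential in $\D$ to be the objects $c \times c'$ and $c \To c'$ computed in $\C$, which lie in $\D$ by the two stability hypotheses, together with the very same projections, pairings, evaluation morphisms and transposes. The entire argument rests on fullness, which guarantees $\D(d, e) = \C(d, e)$ for all objects $d, e$ of $\D$; this is what lets both the existence and the uniqueness of every mediating morphism pass from $\C$ to $\D$ unchanged.

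For the terminal object, the unique arrow $d \to 1$ in $\C$ lies in $\D$ by fullness and stays unique because the hom-sets agree. For a product $c \times c'$, the projections are $\C$-arrows between objects of $\D$, hence $\D$-arrows, and for any $f : d \to c$ and $g : d \to c'$ in $\D$ the pairing formed in $\C$ is again a $\D$-arrow satisfying the same equations and the same uniqueness clause, all because the relevant hom-sets coincide. The exponential case is identical in spirit: $(c \To c') \times c$ lies in $\D$ by product stability, so its evaluation morphism to $c'$ is a $\D$-arrow, and the transpose in $\C$ of any $f : d \times c \to c'$ from $\D$ is a $\D$-arrow enjoying the same universal property. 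Thus $\D$ is a CCC in its own right.

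The only point genuinely needing attention---and the reason the statement is phrased for a \emph{full} subcategory---is the strictness condition. Because this paper requires CCC functors to preserve the chosen terminal object, products and exponentials on the nose, I would insist that the structure on $\D$ be the one inherited identically from $\C$, as above. With that choice the inclusion $\D \hookrightarrow \C$ maps each chosen datum of $\D$ to the literally identical datum of $\C$, so it is a strict CCC functor by construction rather than merely up to coherent isomorphism. Everything else is a routine transfer of universal properties along fullness.
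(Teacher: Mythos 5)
Your proof is correct; note that the paper states this claim without any proof at all, treating it as a routine verification, and your argument---inheriting the chosen structure of $\C$ and transferring existence and uniqueness of mediating morphisms along the equality of hom-sets given by fullness---is exactly the standard argument the paper implicitly relies on. You also correctly handle the two points that deserve care: that product stability is what makes the evaluation morphism $(c \To c') \times c \to c'$ an arrow between objects of $\D$, and that taking the structure on $\D$ to be literally that of $\C$ makes the inclusion a \emph{strict} CCC functor, as this paper's on-the-nose convention requires.
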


\paragraph*{Proof of \Cref{thm:squeezing-ccc}}
\label{paragraph:proof-squeezing-ccc}

By definition of a "squeezing structure", we have the following facts:
\begin{itemize}
    \item The terminal object~$1$ of~$\C$ is an object of~$\Sqz{\C}$.
    \item If~$c$ and~$c'$ belong to~$\Sqz{\C}$, then~$c \times c'$ still does. Indeed, we have the following morphisms:
    \[
        L_{c \times c'} \longtoleft L_c \times L_{c'} \longtoleft c \times c'
        \qquand
        c \times c' \longtoright R_c \times R_{c'} \longtoright R_{c \times c'}
        \ .
    \]
    \item If~$c$ and~$c'$ belong to~$\Sqz{\C}$, then~$c \tto c'$ still does. Indeed, we have the following morphisms:
    \[
        L_{c \tto c'} \longtoleft R_c \tto L_{c'} \longtoleft c \tto c'
        \qquand
        c \tto c' \longtoright L_c \tto R_{c'} \longtoright R_{c \tto c'}
        \ .
    \]
\end{itemize}
By \Cref{claim:full-sub-ccc}, we get that $\Sqz{\C}$ is a sub-CCC of $\C$.

\paragraph*{Proof of \Cref{prop:ge-ccc}}
\label{paragraph:proof-ge-ccc}

The following facts hold about the category $\Cge$ of "inhabited objects":
\begin{itemize}
    \item The terminal object $1$ of $\C$ is in $\Cge$.
    \item If $c$ and $c'$ belong to $\Cge$, which means that there exists $f \in \C(1, c)$ and $f' \in \C(1, c')$, then the existence of the morphism $\langle f, f'\rangle \in \C(1, c \times c')$ shows that $c \times c'$ belongs to $\Cge$.
    \item If $c$ is any object of $\C$ (so, it may in particular be an object of $\Cge$) and $c'$ is an object of $\Cge$, which means that there exists a morphism $f' \in \C(1, c')$, then the curryfication of the composition $f \circ !_c$, where $!_c \in \C(c, 1)$, shows that $c \To c'$ belongs to $\Cge$. Notice that the argument actually shows that $\Cge$ is an exponential ideal.
\end{itemize}
By \Cref{claim:full-sub-ccc}, this proves that $\Cge$ is a sub-CCC of $\C$.

\paragraph*{First proof of \Cref{prop:ccc-partial-surj}}
\label{paragraph:first-proof-ccc-partial-surj}

Let $R$ be the "CCC of logical relations" from $\FinSet$ to $\E$. We show that the full subcategory $\Rps$ of objects $(Q, e, \Vdash)$ such that the relation $\Vdash$ is a partial surjection is a sub-CCC of $\R$. We have the following facts:
\begin{itemize}
    \item The terminal object of $\R$ belongs to $\Rps$.
    \item Let $(Q, e, \Vdash)$ and $(Q', e', \Vdash')$ be objects of $\Rps$, and $\Vdash^\times$ be the relation of their product.
    \begin{itemize}
        \item Let $\langle q, q'\rangle \in Q \times Q'$ and $\langle x_i, x_i'\rangle \in \E(1, e \times e')$ such that $\langle q, q'\rangle \Vdash^\times \langle x_i, x_i'\rangle$ for $i = 1, 2$.
        
        This means that $q \Vdash x_i$ and $q' \Vdash' x_i'$ for $i = 1, 2$.
        By functionality of $\Vdash$ and $\Vdash'$, we get that $x_1 = x_2$ and $x_1' = x_2'$.
        Therefore, $\langle x_1, x_1'\rangle = \langle x_2, x_2'\rangle$, so $\Vdash^\times$ is functional.
        
        \item Let $\langle x, x'\rangle$ be any element of $\E(1, e \times e')$.

        By surjectivity of $\Vdash$ and $\Vdash'$, there exists $q \in Q$ and $q' \in Q'$ such that $q \Vdash x$ and $q \Vdash x'$.
        Therefore, $\langle q, q'\rangle \Vdash^\times \langle x, x'\rangle$, so $\Vdash^\times$ is surjective.
    \end{itemize}
    \item Let $(Q, e, \Vdash)$ and $(Q', e', \Vdash')$ be objects of $\Rps$, and $\Vdash^\tto$ be the relation of their arrow.
    \begin{itemize}
        \item Let $f \in Q \tto Q'$ and $g_i \in \E(1, e \tto e')$ such that $f \Vdash^\tto g_1$ and $f \Vdash^\tto g_2$.
        
        Let $x$ be any element of $\E(1, e)$.
        
        By surjectivity of $\Vdash$, there exists a $q \in Q$ such that $q \Vdash x$, so $f(q) \Vdash' g_i \circ x$ for $i = 1, 2$.
        
        By functionality of $\Vdash'$, we get $g_1 \circ x = g_2 \circ x$. As this equality holds for all $x \in \E(1, e)$ \emph{and $\E$ is well-pointed}, we get that $g_1 = g_2$, which proves that $\Vdash^\tto$ is functional.
        \item Let $g \in \E(1, e \tto e')$. By surjectivity of $R'$, we can choose a function $\rho\colon \E(1,X') \to Q'$ such that $(\rho(x'), x') \in R'$ for every $x'$. Let $f'(q) = \rho(g \circ (R(q)))$ where $R$, as a functional relation, can seen as a partial function $Q \rightharpoonup \E(1,X)$. Then $f'$ is also a partial function; since its codomain $Q'$ is non-empty (indeed, $\E(1,X')\neq\varnothing$ by assumption and $\rho(\E(1,X')) \subseteq Q'$) it can be extended to a total function $f\colon Q\to Q'$. This construction ensures that $(f,g)\in R \tto R'$, so we have shown that $R \tto R'$ is surjective. \qedhere
    \end{itemize}
\end{itemize}
By \Cref{claim:full-sub-ccc}, we get that $\Rps$ is a sub-CCC of $\R$.

\paragraph*{Proof of \Cref{prop:le-eq-ccc}}
\label{paragraph:proof-le-eq-ccc}

We remark that the following facts hold about $\Ele$:
\begin{itemize}
    \item The terminal object $1$ of $\E$ is in $\Ele$.
    \item If $e$ and $e'$ belong to $\Ele$, then $\E(1, e \times e')$ is in bijection with $\E(1, e) \times \E(1, e')$, so $e \times e'$ belongs to $\Ele$.
    \item If $e$ and $e'$ belong to $\Ele$, as $\E$ is well-pointed, we have an injection of $\E(1, e \tto e')$ into $\E(1, e) \tto \E(1, e')$, so $e \tto e'$ belongs to $\Ele$.
\end{itemize}
By \Cref{claim:full-sub-ccc}, this proves that $\Ele$ is a sub-CCC of $\E$.

Let $e$ and $e'$ be two objects of $\Eeq$. As the object $e \tto e'$ belongs to both~$\Ege$ and~$\Ele$, there exists a unique morphism from $e$ to $e'$. This shows that $\Eeq$ is equivalent to the terminal category.

\section{Partial surjections}
\label{appendix:partial-surjections}

\paragraph*{Proof of \Cref{lem:sqz-partial-surj}}
\label{appendix:proof-sqz-partial-surj}

Let $c = (c_1, c_2, \Vdash)$ be an object of $\Sqz{\R}$. There exist morphisms
\[
    (u_1, u_2)
    \ :\ 
    L_c \longtoleft c
    \qquand
    (v_1, v_2)
    \ :\ 
    c \longtoright R_c
    \ .
\]
We show that the relation $\Vdash$ is functional and surjective.
\begin{itemize}
    \item Let $x_1 \in \C_1(1, c_1)$ and $x_2, x_2' \in \C_2(1, c_2)$ such that $x_1 \Vdash x_2$ and $x_1 \Vdash x_2'$.
    
    As $(v_1, v_2)$ is a morphism, we get $v_1 \circ x_1 \Vdash v_2 \circ x_2$ and $v_1 \circ x_1 \Vdash v_2 \circ x_2'$.
    As the relation of $R_c$ is functional, we get that $v_2 \circ x_2 = v_2 \circ x_2'$.
    By injectivity of $\C_2(1, v_2)$, we get that $x_2 = x_2'$, so the relation $\Vdash$ is functional.
    \item Let $x_2 \in \C_2(1, c_2)$. We write $(c_{L, 1}, c_{L, 2} \Vdash_L)$ for the object $L_c$.
    
    By surjectivity of $\C_2(1, v_2)$, there exists $x_{L, 2} \in \C_2(1, c_{L, 2})$ such that $x_2 = v_2 \circ x_{L, 2}$.
    By surjectivity of the relation $\Vdash_L$, there exists $x_{L, 1}$ such that $x_{L, 1} \Vdash_L x_{L, 2}$.
    As $(u_1, u_2)$ is a morphism, we have $u_1 \circ x_{L, 1} \Vdash x_2$, so the relation $\Vdash$ is surjective.
\end{itemize}
This shows that all objects in $\Sqz{\R}$ are partial surjections.

\section{Squeezing structures}

\paragraph*{Second proof of \Cref{prop:ccc-partial-surj}}
\label{paragraph:second-proof-ccc-partial-surj}

There exists a "squeezing structure" such that
\begin{itemize}
    \item for any object $R = (Q, e, \Vdash)$ of $\R$, the objects $L_R$ and $R_R$ are both equal to $T_e$;
    \item the two wide subcategories $\Rleft$ and $\Rright$ are both taken to be the wide subcategory of "target-identities".
\end{itemize}
The fact that the point functor $T$ is product-preserving gives us all the "target-identities" of the "squeezing structure", except for the case of the morphism
\[
    (\E(1, e), e, \sim_e) \tto (\E(1, e'), e', \sim_{e'})
    \ \longto\ 
    (\E(1, e \tto e'), e \tto e', \sim_{e \tto e'})
    \ .
\]
For this morphism, we will crucially use the fact that we relate $\FinSet$ and $\E$, which is well-pointed. We have a set-theoretic function
\[
    i
    \quad:\quad
    \E(1, e \tto e')
    \ \longto\ 
    \E(1, e) \tto \E(1, e')
\]
which is injective as $\E$ is well-pointed and lifts to a "target-identity". As the object $e \tto e'$ of $\E$ is "inhabited", the set $\E(1, e \tto e')$ is non-empty and the set-theoretic function $i$ admits a retraction
\[
    r
    \quad:\quad
    \E(1, e) \tto \E(1, e')
    \ \longto\ 
    \E(1, e \tto e')
\]
which lifts to a "target-identity" $T_e \tto T_{e'} \to T_{e \tto e'}$.
By \Cref{lem:sqz-partial-surj}, we know that the objects of $\Sqz{\R}$ are partial surjections. Conversely, suppose that $R = (Q, e, \Vdash)$ is such that $\Vdash$ is a partial surjection. We the two following "target-identities":
\begin{itemize}
    \item The fact that the relation $\Vdash$ is surjective yields a set-theoretic function $\E(1, e) \to Q$ which lifts to a "target-identity" $T_e \to R$.
    \item As the relation $\Vdash$ is functional and $\E(1, e)$ is non-empty as $e$ is "inhabited", there exists a set-theoretic function $Q \to \E(1, e)$ extending $\Vdash$, and any such set-theoretic function lifts to a "target-identity" $R \to T_e$.
\end{itemize}
This shows that the objects of $\Sqz{\R}$ are exactly the partial surjections, which then form a sub-CCC of $\R$ by \Cref{thm:squeezing-ccc}.

\paragraph*{Proof of \Cref{prop:squeezing-lam-finord}}
\label{paragraph:proof-squeezing-lam-finord}

It is clear that "target-identities" are composable and stable under finite products and exponentials, which is what is asked given that left and right morphisms are the same in the present case.

We are left with the task to show the existence of the morphisms as described in \Cref{eq:left-right-morphisms} in our particular setting. As there exists at most one "target-identity" whose $\Lam$ component is a given $\lambda$-term, we give these $\lambda$-terms.
\begin{itemize}
    \item \textbf{Case~$\Bij{1} \to 1$:} The unique morphism $\Bij{1} \to 1$ is a "target-identity".
    \item \textbf{Case~$1 \to \Bij{1}$:} The $\lambda$-term $\lambda (y : 1).\,\lambda(x : \tyo).\,x$ lifts to a "target-identity"
    \[
        1
        \quad\longto\quad
        (\Fin{1}, 1, \logrel{1})
        \ .
    \]
    \item \textbf{Case~$\Bij{n \times n'} \to \Bij{n} \times \Bij{n'}$:} the fact that~$\Bij{(-)}$ is a functor gives us directly such a morphism which can be verified to be a "target-identity".
    \item \textbf{Case~$\Bij{n} \times \Bij{n'} \to \Bij{n \times n'}$:} The morphism is given by the~$\lambda$-term
    \[
        \lambda(p : \Fin{n} \times \Fin{n'}).
        \,\lambda(x : \tyo^{n \times n'}).\,
        p_1\ 
        \langle\,
            \Fin{1 \times \Id}\ p_2\ x
            , \dots, 
            \Fin{n \times \Id}\ p_2\ x
        \rangle
    \]
    where~$i \times \Id$ is the function~$n' \to n \times n'$ sending $j$ on $(i, j)$, from which we get the~$\lambda$-term~$\Fin{i \times \Id}$ of simple type~$\Fin{n'} \tto \Fin{n \times n'}$.
    
    This $\lambda$-term lifts to a "target-identity"
    \[
        \Bij{n} \times \Bij{n'}
        \ \longto\ 
        \Bij{n \times n'}
        \ .
    \]
    \item \textbf{Case~$\Bij{n \tto n'} \to \Bij{n} \tto \Bij{n'}$:} We have the "target-identity"
    \[
        \Bij{n \tto n'} \times \Bij{n}
        \ \longto\ 
        \Bij{(n \tto n') \times n}
    \]
    which, when composed with the morphism $\Bij{\ev_{n, n'}}$
    which has the evaluation morphism~$\ev_{n, n'}$ as target-component, yields a morphism
    \[
        \Bij{n \tto n'} \times \Bij{n}
        \ \longto\ 
        \Bij{n'}
    \]
    which, after curryfication, gets us a "target-identity" $\Bij{n \tto n'} \to \Bij{n} \tto \Bij{n'}$.
    \item \textbf{Case~$\Bij{n} \tto \Bij{n'} \to \Bij{n \tto n'}$:} Notice that the equality
    \[
        n \tto n'
        \ =\ 
        \underbrace{n' \times \dots \times n'}_{\text{$n$ times}}
    \]
    shows that the~$\lambda$-term of type~$\Fin{n} \tto \Fin{n'} \to (\Fin{n'})^n$
    \[
        \lambda(F : \Fin{n} \tto \Fin{n'}).\,\langle F\,\pi_{1}, \dots, F\,\pi_n \rangle
    \]
    lifts to a "target-identity" $\Bij{n} \tto \Bij{n'} \to \Bij{n'}^n$. By postcomposing this "target-identity" with an iteration of the "target-identities"~$\Bij{m} \times \Bij{m'} \to \Bij{m \times m'}$, we obtain the "target-identity"
    \[
        \Bij{n} \tto \Bij{n'}
        \ \longto\ 
        \Bij{n \tto n'}
        \ .
    \]
\end{itemize}
This finishes the proof that there is a "squeezing structure" as described in the statement of the proposition.

\end{document}